\DeclarePairedDelimiter{\floor}{\lfloor}{\rfloor}
\newlength{\extramargin}
\newtheorem{thm}{Theorem}[section]
\newtheorem{cor}[thm]{Corollary}
\newtheorem{lem}[thm]{Lemma}
\newtheorem{defn}[thm]{Definition}
\newtheorem{prop}[thm]{Proposition}
\newtheorem{preremark}[thm]{Remark}
\newenvironment{remark}%
  {\begin{preremark}\upshape}{\end{preremark}}
\newtheorem{preexample}[thm]{Example}
  {\begin{preexample}\upshape}{\end{preexample}}
\numberwithin{equation}{section}
\numberwithin{equation}{section}
\numberwithin{thm}{section}
\newtheorem{lem/defn}[thm]{Lemma/Definition}
\newtheorem{preex/defn}[thm]{Example/Definition}
\newenvironment{ex/defn}%
  {\begin{preex/defn}\upshape}{\end{preex/defn}}
\newcommand{\ten}{\otimes}
\newcommand{\abs}[1]{\lvert#1\rvert}
\DeclareMathOperator{\End}{End}
\begin{document}

\title[Multi-local Virasoro representations on the Fock space $\mathit{F^{\otimes \frac{1}{2}}}$]{Boson-fermion correspondence of type D-A and multi-local Virasoro representations on the Fock space $\mathit{F^{\otimes \frac{1}{2}}}$}
\author{Iana I. Anguelova}

\address{Department of Mathematics,  College of Charleston,
Charleston SC 29424 }
\email{anguelovai@cofc.edu}

\subjclass[2000]{81R10, 17B68, 17B69}
\date{\today}

\begin{abstract}
We construct the bosonization of the Fock space $\mathit{F^{\otimes \frac{1}{2}}}$ of a single neutral fermion by using a 2-point local Heisenberg field. We  decompose the Fock space $\mathit{F^{\otimes \frac{1}{2}}}$ as a direct sum of irreducible highest weight  modules for the Heisenberg algebra $\mathcal{H}_{\mathbb{Z}}$, and thus we show that under the  Heisenberg  $\mathcal{H}_{\mathbb{Z}}$ action the Fock space  $\mathit{F^{\otimes \frac{1}{2}}}$ of the single neutral fermion is isomorphic to the Fock space $\mathit{F^{\otimes 1}}$ of a pair of charged free fermions, thereby constructing the boson-fermion correspondence of type D-A.  As a corollary we obtain the Jacobi identity equating the graded dimension formulas utilizing both the Heisenberg and the Virasoro gradings on $\mathit{F^{\otimes \frac{1}{2}}}$.
We construct  a family of   2-point-local   Virasoro fields with central charge $-2+12\lambda -12\lambda^2, \ \lambda\in \mathbb{C}$,  on the Fock space $\mathit{F^{\otimes \frac{1}{2}}}$. We construct a $W_{1+\infty}$ representation on $\mathit{F^{\otimes \frac{1}{2}}}$ and   show that  under the $W_{1+\infty}$ action   $\mathit{F^{\otimes \frac{1}{2}}}$  is again isomorphic to  $\mathit{F^{\otimes 1}}$.
\end{abstract}

\maketitle

\tableofcontents

\section{Introduction}
\label{sec:intro}
This paper is part of a series   studying various particle correspondences in 2 dimensions from the point of view of chiral algebras (vertex algebras) and representation theory.
There are several types of particle correspondences, such as the boson-fermion and boson-boson correspondences, known in the literature. The best known  is the charged free fermion-boson correspondence, also known as  the
boson-fermion correspondence of type A (the name "type A" is due to the fact that this correspondence is canonically related  to the basic representations of the Kac-Moody algebras of type A, see \cite{DJKM-1},  \cite{Frenkel-BF}, \cite{KacRaina}). The correspondence of type A is an isomorphism of super vertex
algebras, but most boson-fermion correspondences cannot be described by the
concept of a super vertex algebra due to the more general singularities in their operator
product expansions. In order to describe the more general cases, including the correspondences of types B, C and D-A, in \cite{Ang-Varna2} and \cite{AngTVA} we defined the  concept of a twisted vertex algebra which generalizes super vertex algebra. By utilizing the bicharacter construction in \cite{AngTVA} we showed that the correspondences of types B, C  and D-A are isomorphisms of twisted vertex algebras.
There have been a long-going research on the different boson-fermion correspondences and their connections with the representation theory of various infinite-dimensional Lie algebras.
The first  to write on the topic of the relationship between a boson-fermion correspondence and representation theory was Igor Frenkel in \cite{Frenkel-BF} and Date, Jimbo, Kashiwara, Miwa in \cite{DJKM3}, \cite{DJKM-1}. Since then many attempts have been made to understand the boson-fermion correspondences as nothing more but an isomorphism of infinite-dimensional Lie algebra modules, i.e., to understand  a boson-fermion correspondence as nothing more but an isomorphism between two realizations (one bosonic and one fermionic) of certain basic modules for a given infinite Lie algebra (such as $a_{\infty}=\hat{gl}_{\infty}$ or $\widehat{sl}_n$, or $\widehat{\mathcal{D}}=W_{1+\infty}$). Starting with Igor Frenkel's work in \cite{Frenkel-BF}, and later, the boson-fermion correspondence of type A  was related to different kinds of Howe-type dualities (see e.g. \cite{WangKac}, \cite{WangDuality}, \cite{WangDual}), and again there was an attempt to understand a boson-fermion correspondence as being identified by, or equivalent to,  a certain duality or a basic  modules decomposition (see e.g. \cite{WangDuality}, \cite{WangDual}).
But what we contend is that the boson-fermion correspondences are more than just maps (or isomorphisms) between certain Lie algebra modules : a boson-fermion correspondence is first and foremost an isomorphism between two different   chiral field theories, one fermionic (expressible in terms of free fermions and their descendants), the other bosonic (expressible in terms  of exponentiated bosons). What we  show is that although the isomorphism of Lie algebra representations (and indeed various dualities) follow
from a boson-fermion correspondence, the isomorphism as Lie algebra modules is  \textbf{not equivalent} to a
boson-fermion correspondence. In fact,  as Igor Frenkel was careful to summarize in Theorem II.4.1 of his very influential paper \cite{Frenkel-BF}, "the canonical"
isomorphism of the two $o(2l)$-current algebra representations in a bosonic and the fermionic Fock spaces \textbf{follows} from the boson-fermion correspondence (in fact that is what makes the isomorphism canonical), but not vice versa.
 In particular, we consider the \textbf{single} neutral fermion Fock space  $\mathit{F^{\ten \frac{1}{2}}}$ and show that as modules for various  Lie algebras  $\mathit{F^{\ten \frac{1}{2}}}\cong \mathit{F^{\ten 1}}$ ($\mathit{F^{\ten 1}}$ is the   Fock space for \textbf{a pair} of free charged fermions, see \cite{KacRaina}, \cite{Kac}, and Remark \ref{remark:F^1} for some details, and is the fermionic side of the boson-fermion correspondence of type A). In particular, in \cite{ACJ2}, we showed that $\mathit{F^{\ten \frac{1}{2}}}$ carry a representation of the $a_{\infty}$ algebra, and as $a_{\infty}$ modules, $\mathit{F^{\ten \frac{1}{2}}}\cong \mathit{F^{\ten 1}}$. In this paper we consider the Heisenberg algebra $\mathcal{H}_{\mathbb{Z}}$ most often associated with bosonization and boson-fermion correspondences (the corresponding Heisenberg field is the free boson field). By  constructing the free boson Heisenberg field on $\mathit{F^{\ten \frac{1}{2}}}$ from  the neutral fermion generating field we show that $\mathit{F^{\ten \frac{1}{2}}}$ carry a representation of the Heisenberg algebra $\mathcal{H}_{\mathbb{Z}}$. Further, we decompose the Fock space $\mathit{F^{\otimes \frac{1}{2}}}$ as a direct sum of irreducible highest weight  modules for the Heisenberg algebra $\mathcal{H}_{\mathbb{Z}}$. This decomposition shows that as  Heisenberg  $\mathcal{H}_{\mathbb{Z}}$ modules the Fock space  $\mathit{F^{\ten \frac{1}{2}}}$ of the single neutral fermion is isomorphic to the Fock space $\mathit{F^{\ten 1}}$ of a  pair of free charged fermions. This is one of the reasons we call this bosonization  "boson-fermion correspondence of type D-A", as $\mathit{F^{\ten \frac{1}{2}}}$ is canonically related  to the basic representations of the Kac-Moody algebras of type D (see e.g. \cite{WangKac}, \cite{WangDuality}, \cite{WangDual}, see also \cite{AngTVA}, \cite{ACJ}).

As expected in chiral conformal field theory, many examples of super vertex algebras have a  Virasoro structure. Super vertex algebras with a Virasoro field  are called vertex operator algebras (see e.g. \cite{FLM}, \cite{FHL}, \cite{LiLep}, subject also to additional axioms), or conformal vertex algebras (\cite{Kac}, \cite{FZvi}); and are extensively studied. In particular, the boson-fermion correspondence of type A has a  family of Virasoro fields, $L^{A, \lambda}(z)$, parametrized by $\lambda \in \mathbb{C}$, with central charge $-12\lambda^2 +12\lambda -2$   (see e.g. \cite{KacRaina}, \cite{Kac}).
 As was  done for super vertex algebras, in a series of papers we study the existence of Virasoro fields (see Definition \ref{defn:VirStr}) in important examples of twisted vertex algebras, such as the  correspondences of types B, C and D-A. In the case of $\mathit{F^{\ten \frac{1}{2}}}$ (type D-A), it is well known that as part of its conformal super vertex algebra structure  $\mathit{F^{\ten \frac{1}{2}}}$ carries a representation of the Virasoro algebra with central charge $\frac{1}{2}$ (see e.g. \cite{Triality}, \cite{WangKac}).  Using this well-know Virasoro structure, and the new Heisenberg structure on $\mathit{F^{\otimes \frac{1}{2}}}$
 we obtain the Jacobi identity equating two graded dimension formulas representing the Heisenberg and Virasoro gradings on $\mathit{F^{\otimes \frac{1}{2}}}$.

 Next, we use the 2-point locality to construct a Virasoro field with central charge 1 on $\mathit{F^{\ten \frac{1}{2}}}$, and indeed a 2-parameter family of 2-point Virasoro fields on $\mathit{F^{\ten \frac{1}{2}}}$ (with 1-parameter central charge $-12\lambda^2 +12\lambda -2$), see Proposition \ref{prop:VirasoroNew}. In fact this 2-parameter family of Virasoro fields is "inherited" from $\mathit{F^{\ten 1}}$, with a slight twist, due of course to the twisted vertex algebra isomorphism.
 Allowing multi-locality allows us to trivially construct, in general, new Virasoro representation from existing ones, for instance by the "half the modes, double the charge" trick, see Proposition \ref{prop:doublingcharge}.
 Thus what we see on $\mathit{F^{\ten \frac{1}{2}}}$ is that the 2-point local twisted vertex algebra describing the  correspondence of type D-A  has two distinct types of Virasoro structures (summarized in Theorem \ref{thm:summaryVir}). These structures are distinct in two ways: first, they have different central charges (the main examples have charges correspondingly $\frac{1}{2}$ and 1). The Virasoro fields with central charge $\frac{1}{2}$ are 1-point self-local.  However the Virasoro fields with central charge $1$ are $2$-point local, although they  could be reduced to the usual 1-point locality by a change of variables $z^2$ to $z$ (i.e., they produce genuine Virasoro representations, and not representations of a 2-point Virasoro algebra as in  e.g. \cite{Krich-Nov1}, \cite{Krich-Nov2}, \cite{Schl}, \cite{Cox-Vir}).

Finally, we  construct a $\widehat{\mathcal{D}}$ representation on $\mathit{F^{\ten \frac{1}{2}}}$. $\widehat{\mathcal{D}}$ denotes the universal central extension of the Lie algebra of differential operators on the circle,  see e.g., \cite{Kac}, \cite{WangDual},  although it is very often denoted also by  $W_{1+\infty}$,
We show that  under the $W_{1+\infty}$ action   $\mathit{F^{\ten \frac{1}{2}}}$  is again, not surprisingly,  isomorphic to  $\mathit{F^{\ten 1}}$.

Even though $\mathit{F^{\ten \frac{1}{2}}}$ has a super-vertex algebra structure (i.e., $N=1$ twisted vertex algebra structure), this is not enough to produce the new  Heisenberg  $\mathcal{H}_{\mathbb{Z}}$ representation. To construct the Heisenberg field  we needed to introduce 2-point locality at the least (i.e., the fields we consider on $\mathit{F^{\ten \frac{1}{2}}}$ are allowed to be multi-local,  at both $z=w$ and $z=-w$, see Definition \ref{defn:parity}).  This in turn allows for more descendant fields than in a super vertex algebra. More precisely,  there is a twisted vertex algebra structure on $\mathit{F^{\ten \frac{1}{2}}}$ of order $N$ (see \cite{AngTVA}, \cite{ACJ} for a precise definition of a twisted vertex algebra).
 We will prove in this paper that $\mathit{F^{\ten 1}}\cong\mathit{F^{\ten \frac{1}{2}}}$   as graded vector spaces,  as  $\mathcal{H}_{\mathbb{Z}}$  modules, and as $W_{1+\infty}$ modules, although obviously there should be some difference in terms of the physics structures on these two  spaces. As super vertex algebras, i.e., twisted vertex algebras of order $N=1$,  $\mathit{F^{\ten 1}}$ and  $\mathit{F^{\ten \frac{1}{2}}}$ are not equivalent. But if we introduce an $N=2$ twisted vertex algebra structures both on $\mathit{F^{\ten \frac{1}{2}}}$ and on $\mathit{F^{\ten 1}}$, there is an isomorphism of  twisted vertex algebra structures which constitutes the boson-fermion correspondence of type D-A. If we consider a twisted vertex algebra structure with $N>2$, then once more the twisted vertex algebra structures on  $\mathit{F^{\ten 1}}$ and  $\mathit{F^{\ten \frac{1}{2}}}$ will become inequivalent, as we will show in a follow-up paper.
 This shows that the type of chiral algebra structure on $\mathit{F^{\ten 1}}$ vs $\mathit{F^{\ten \frac{1}{2}}}$ is of great importance, in particular the set of points of locality is a necessary part of the data describing any boson-fermion correspondence. Thus the chiral algebra  structure on $\mathit{F^{\ten \frac{1}{2}}}$, vs the chiral structure on $\mathit{F^{\ten 1}}$ is really what distinguishes these spaces and their physical structure and what defines a boson-fermion correspondence.

\section{Notation and background}
\label{section:background}

We work over the field of complex numbers $\mathbb{C}$. Throughout we assume $N$ is a positive integer.

Some of the following definitions (e.g., that of a field in chiral quantum field theory and normal ordered products) are well known, they can be found for instance in \cite{FLM}, \cite{FHL},  \cite{Kac}, \cite{LiLep} and others, and we include them for completeness:
\begin{defn}
\begin{bf} (Field)\end{bf}\label{defn:field-fin}
 A field $a(z)$ acting on a vector space $V$ is a series of the form
\begin{equation}
a(z)=\sum_{n\in \mathbf{Z}}a_{(n)}z^{-n-1}, \ \ \ a_{(n)}\in
\End(V), \ \ \text{such that }\ a_{(n)}v=0 \ \ \text{for any}\ v\in V, \ n\gg 0.
\end{equation}
\end{defn}
From now throughout the paper let $\epsilon$ be a primitive $N$th root of unity, we will mostly use $N=2$. We also assume that the vector space $V$ (that the fields act on) is a super-vector space, i.e., a $\mathbb{Z}_2$ graded vector space. Although one can define a parity of a  field $a(z)$ through the $\mathbb{Z}_2$ grading of the linear map $a(z): V\to V((z))$, the notion of a parity of a field is mostly only useful  in combination with some notion of locality. Hence we use the following definition:
\begin{defn}(\cite{ACJ})  \label{defn:parity} {\bf ($N$-point self-local fields and parity) }
We say that a field $a(z)$ on a vector space $V$ is {\bf even} and $N$-point self-local at $1, \epsilon, \epsilon^2, \dots, \epsilon^{N-1}$,  if there exist $n_0, n_1, \dots  , n_{N-1}$ such that
\begin{equation}
(z- w)^{n_{0}}(z-\epsilon w)^{n_{1}}\cdots (z-\epsilon^{N-1} w)^{n_{N-1}}[a(z),a(w)] =0.
\end{equation}
In this case we set the {\bf parity} $p(a(z))$ of $a(z)$ to be $0$.
\\
We set $\{a, b\}  =ab +ba$.We say that a field $a(z)$ on $V$ is $N$-point self-local at $1, \epsilon, \epsilon^2, \dots, \epsilon^{N-1}$
and {\bf odd} if there exist $n_0, n_1, \dots , n_{N-1}$ such that
\begin{equation}
(z- w)^{n_{0}}(z-\epsilon w)^{n_{1}}\cdots (z-\epsilon^{N-1} w)^{n_{N-1}}\{a(z),a(w)\}=0.
\end{equation}
In this case we set the {\bf parity} $p(a(z))$ to be $1$. For brevity we will just write $p(a)$ instead of $p(a(z))$. If $a(z)$ is even or odd field, we say that $a(z)$ is homogeneous.\\
Finally,  if $a(z), b(z)$ are homogeneous fields on $V$, we say that $a(z)$ and $b(z)$ are {\it $N$-point mutually local} at $1, \epsilon, \epsilon^2, \dots, \epsilon^{N-1}$
if there exist $n_0, n_1, \dots , n_{N-1} \in \mathbb{Z}_{\geq 0}$ such that
\begin{equation}
(z- w)^{n_{0}}(z-\epsilon w)^{n_{1}}\cdots (z-\epsilon^{N-1} w)^{n_{N-1}}\left(a(z)b(w)-(-1)^{p(a)p(b)}b(w)a(z)\right)=0.
\end{equation}
\end{defn}
For a rational function $f(z,w)$,  with poles only at $z=0$,  $z=\epsilon^i w, \ 0\leq i\leq N-1$, we denote by $i_{z,w}f(z,w)$
the expansion of $f(z,w)$ in the region $\abs{z}\gg \abs{w}$ (the region in the complex $z$ plane outside of all  the points $z=\epsilon^i w, \ 0\leq i\leq N-1$), and correspondingly for
$i_{w,z}f(z,w)$.
Let
\begin{equation}
a(z)_-:=\sum_{n\geq 0}a_nz^{-n-1},\quad a(z)_+:=\sum_{n<0}a_nz^{-n-1}.
\end{equation}
\begin{defn} \label{defn:normalorder} {\bf (Normal ordered product)}
Let $a(z), b(z)$ be homogeneous fields on a vector space $V$. Define
\begin{equation}
:a(z)b(w):=a(z)_+b(w)+(-1)^{p(a)p(b)}b(w)a_-(z).
\end{equation}
One calls this the normal ordered product of $a(z)$ and $b(w)$. We extend by linearity the notion of normal ordered product  to any two fields which are linear combinations of homogeneous fields.
\end{defn}
\begin{remark}
Let  $a(z), b(z)$ be fields on a vector space $V$. Then
$:a(z)b(\epsilon ^i z):$ and $:a(\epsilon ^i z)b( z):$ are well defined fields on $V$ for any $i=0, 1, \dots,  N-1$.
\end{remark}
The mathematical background of the well-known and often used in physics notion of Operator Product Expansion (OPE) of product of two fields for case of usual locality ($N=1$) has been established for example in \cite{Kac}, \cite{LiLep}.
The following lemma extended the mathematical background  to the case of  $N$-point locality:
\begin{lem} (\cite{ACJ}) {\bf (Operator Product Expansion (OPE))}\label{lem:OPE} \\
Suppose $a(z)$, $b(w)$ are {\it $N$-point mutually local}. Then exists fields $c_{jk}(w)$, $j=0, \dots, N-1; k=0, \dots , n_j-1$, such that we have
 \begin{equation}
 \label{eqn:OPEpolcor}
 a(z)b(w) =i_{z, w} \sum_{j=0}^{N-1}\sum_{k=0}^{n_j-1}\frac{c_{jk}(w)}{(z-\epsilon^j w)^{k+1}} + :a(z)b(w):.
 \end{equation}
We call the fields $c_{jk}(w)$, $j=0, \dots, N-1; k=0, \dots , n_j-1$ OPE coefficients.We will write the above OPE as
 \begin{equation}
 a(z)b(w) \sim  \sum_{j=1}^N\sum_{k=0}^{n_j-1}\frac{c_{jk}(w)}{(z-\epsilon_j w)^{k+1}}.
 \end{equation}
 The $\sim $ signifies that we have only written the singular part, and also we have omitted writing explicitly the expansion $i_{z, w}$, which we do acknowledge  tacitly.
 \end{lem}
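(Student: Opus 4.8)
The plan is to derive \eqref{eqn:OPEpolcor} from a multi-point analogue of the classical formal-distribution decomposition theorem, and then to separate the two regions of expansion $i_{z,w}$ and $i_{w,z}$ by bookkeeping the powers of $z$. I would first introduce, for each $j=0,\dots,N-1$, the formal delta distribution at $z=\epsilon^j w$,
\begin{equation}
\delta(z-\epsilon^j w) = i_{z,w}\frac{1}{z-\epsilon^j w} - i_{w,z}\frac{1}{z-\epsilon^j w},
\end{equation}
and record (with $\partial_z^{(k)}=\tfrac{1}{k!}\partial_z^k$) the identities $(z-\epsilon^j w)^{k+1}\partial_z^{(k)}\delta(z-\epsilon^j w)=0$ and
\begin{equation}
\partial_z^{(k)}\delta(z-\epsilon^j w) = (-1)^k\Bigl(i_{z,w}-i_{w,z}\Bigr)\frac{1}{(z-\epsilon^j w)^{k+1}} .
\end{equation}
The point of the second identity is that the $i_{z,w}$ summand carries only strictly negative powers of $z$, while the $i_{w,z}$ summand carries only non-negative powers of $z$.

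The technical heart (Step 1) is the decomposition theorem: if a formal distribution $d(z,w)$ with coefficients in $\End(V)$ satisfies $p(z,w)d(z,w)=0$ for $p(z,w)=\prod_{j=0}^{N-1}(z-\epsilon^j w)^{n_j}$, then there are unique fields $c_{jk}(w)$, $0\le k\le n_j-1$, with
\begin{equation}
d(z,w) = \sum_{j=0}^{N-1}\sum_{k=0}^{n_j-1} c_{jk}(w)\,\partial_z^{(k)}\delta(z-\epsilon^j w) .
\end{equation}
I would prove this in two stages. When $d$ is annihilated by a single factor $(z-\epsilon^j w)^{n_j}$, the rescaling $z\mapsto\epsilon^{-j}z$ turns this into the classical ($N=1$) statement, which supplies the single-point expansion. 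To reduce a general $p(z,w)$ to such pieces I would use formal partial fractions: because $\epsilon$ is a primitive $N$-th root of unity the loci $z=\epsilon^j w$ are distinct, so the factors $(z-\epsilon^j w)^{n_j}$ are pairwise coprime in $\mathbb{C}[w,w^{-1}][z]$ (their resultants are the units $(\epsilon^j-\epsilon^i)^{n_in_j}w^{n_in_j}$ of $\mathbb{C}[w,w^{-1}]$). B\'ezout then yields $1=\sum_j q_j(z,w)\prod_{i\ne j}(z-\epsilon^i w)^{n_i}$, and setting $d_j=q_j\prod_{i\ne j}(z-\epsilon^i w)^{n_i}d$ gives $d=\sum_j d_j$ with each $d_j$ annihilated by $(z-\epsilon^j w)^{n_j}$ alone, so the single-point case applies term by term.

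With the decomposition in hand I would apply it (Step 2) to $d(z,w)=a(z)b(w)-(-1)^{p(a)p(b)}b(w)a(z)$, which is annihilated by $p(z,w)$ exactly by the $N$-point mutual locality hypothesis, obtaining the delta-function form of the super-commutator. To extract $a(z)b(w)$ (Step 3) I would observe that
\begin{equation}
a(z)b(w)-:a(z)b(w): = a(z)_-b(w)-(-1)^{p(a)p(b)}b(w)a(z)_-
\end{equation}
contains only strictly negative powers of $z$. Comparing the strictly-negative-$z$-power parts of both sides of the decomposition, and using that the $i_{z,w}$ summands are precisely the strictly-negative-$z$-power parts of $\partial_z^{(k)}\delta(z-\epsilon^j w)$, identifies this difference with $i_{z,w}\sum_{j,k}(-1)^k c_{jk}(w)(z-\epsilon^j w)^{-k-1}$. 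Absorbing the signs $(-1)^k$ into the $c_{jk}$ gives exactly \eqref{eqn:OPEpolcor}; the $c_{jk}(w)$ are fields by the finiteness built into $d(z,w)$ and are recovered from $d(z,w)$ by $\Res_z$ extractions at $z=\epsilon^j w$.

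The main obstacle is Step 1, and specifically the passage from a single pole-locus to the product $p(z,w)$. One must check that the partial-fractions splitting is legitimate over formal Laurent series in $w$ and, equivalently, that the distributions $\partial_z^{(k)}\delta(z-\epsilon^j w)$ attached to the different roots of unity are linearly independent over the ring of fields, so that the $c_{jk}(w)$ are well defined and unique. This is precisely where the primitivity of $\epsilon$, hence the distinctness of the loci $z=\epsilon^j w$, is indispensable: were two of the $\epsilon^j$ to coincide, the resultants above would vanish, coprimality would break down, and the OPE coefficients could not be separated.
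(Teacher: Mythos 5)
Your proposal is correct, and it is essentially the expected argument: the formal delta-function decomposition of the super-commutator at each locus $z=\epsilon^j w$ (reduced to the classical $N=1$ case by rescaling, with the partial-fractions/B\'ezout splitting legitimate because the resultants $(\epsilon^i-\epsilon^j)^{n_i n_j}w^{n_i n_j}$ are units in $\mathbb{C}[w,w^{-1}]$), followed by matching the strictly-negative-$z$-power parts against $a(z)_-b(w)-(-1)^{p(a)p(b)}b(w)a(z)_-$, which is precisely the $N$-point generalization of Kac's decomposition theorem for local formal distributions. Note that the paper itself gives no proof of this lemma --- it is quoted from \cite{ACJ} --- and your route coincides with the standard one taken there, so there is nothing to contrast.
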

  \begin{remark}
 Since  the notion of normal ordered product is extended by linearity to any two fields which are linear combinations of homogeneous fields, the Operator Product Expansions formula above applies also  to any two fields which are linear combinations of homogeneous  $N$-point mutually local fields.
\end{remark}
 The OPE expansion of the product of two fields is very convenient, as it completely determines in a very compact manner the commutation relations between the modes of the two fields, and we will use it extensively in what follows.  The OPE expansion in the multi-local case allowed us to  extend the Wick's Theorem (see e.g., \cite{MR85g:81096}, \cite{MR99m:81001}) to the case of multi-locality (see \cite{ACJ}). We further  have the following expansion formula extended to the multi-local case, which we will use extensively in what follows:
 \begin{lem}(\cite{ACJ}) \label{lem:normalprodexpansion}{(\bf Taylor expansion formula for normal ordered products) } \\
Let  $a(z), b(z)$ be  fields on a vector space $V$. Then
\begin{equation}
i_{z, z_0}:a(\epsilon ^i z +z_0)b( z): =\sum _{k\geq 0}\Big(:(\partial_{\epsilon ^i} ^{(k)}a(\epsilon ^i z))b(z):\Big) z_0^k; \quad \text{for any}\ i=0, 1, \dots,  N-1.
\end{equation}
 \end{lem}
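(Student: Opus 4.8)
The plan is to reduce the statement to the single-field Taylor expansion and then lift it through the normal ordering, exploiting the fact that the normal-ordering split is governed entirely by the mode indices, not by the formal variable.

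First I would establish the un-normal-ordered version, namely that $i_{z,z_0}\, a(\epsilon^i z + z_0) = \sum_{k\geq 0}\bigl(\partial_{\epsilon^i}^{(k)} a(\epsilon^i z)\bigr) z_0^k$. Writing $a(w) = \sum_{n\in\mathbb{Z}} a_{(n)} w^{-n-1}$ and substituting $w = \epsilon^i z + z_0$, I would expand each monomial $(\epsilon^i z + z_0)^{-n-1}$ as a binomial series in the region $\abs{z}\gg\abs{z_0}$ (recall $\abs{\epsilon^i z}=\abs{z}$ since $\epsilon$ is a root of unity), which is exactly what $i_{z,z_0}$ prescribes:
\[
i_{z,z_0}(\epsilon^i z + z_0)^{-n-1} = \sum_{k\geq 0}\binom{-n-1}{k}(\epsilon^i z)^{-n-1-k}\, z_0^k .
\]
On the other hand, since $\partial_{\epsilon^i}^{(k)} = \tfrac{1}{k!}(d/du)^k$ acting in the argument $u=\epsilon^i z$ leaves each mode $a_{(n)}$ untouched and multiplies it by $\binom{-n-1}{k}(\epsilon^i z)^{-n-1-k}$, the coefficient of $z_0^k$ matches term by term. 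Because $a(z)$ is a field, $a_{(n)}v=0$ for $n\gg 0$, so all these series are well-defined fields and the interchange of summations is legitimate.

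Second, I would pass to the normal ordered product. The key observation is that the split $a(w) = a(w)_+ + a(w)_-$ is determined solely by the sign of the mode index $n$, and is therefore unaffected both by the substitution $w\mapsto \epsilon^i z + z_0$ and by the divided-power differentiation $\partial_{\epsilon^i}^{(k)}$, which rescales and shifts powers of the variable but never changes $n$. Consequently $\bigl(a(\epsilon^i z + z_0)\bigr)_{\pm}$ expands, under $i_{z,z_0}$, as $\sum_{k}\bigl(\partial_{\epsilon^i}^{(k)} a(\epsilon^i z)\bigr)_{\pm} z_0^k$. Substituting these two expansions into the definition $:a(\epsilon^i z + z_0) b(z): = a(\epsilon^i z + z_0)_+ b(z) + (-1)^{p(a)p(b)} b(z)\, a(\epsilon^i z + z_0)_-$ and collecting by powers of $z_0$ yields the claimed identity, since each coefficient reassembles into $:(\partial_{\epsilon^i}^{(k)} a(\epsilon^i z)) b(z):$.

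The only genuinely delicate point, and where I would spend the most care, is the commutation of the normal-ordering operation with the formal expansion $i_{z,z_0}$: the substitution and expansion act on the variable of $a$, whereas the $\pm$ decomposition is a mode-based operation, so one must check that applying $i_{z,z_0}$ term by term to the already-split series $a(w)_\pm$ gives the same result as splitting after expansion. This is guaranteed precisely because differentiation and binomial expansion preserve the mode label $n$, so no creation mode is ever turned into an annihilation mode or conversely. The $N$-point refinement over the $N=1$ case treated in \cite{Kac}, \cite{LiLep} is essentially cosmetic: since $\abs{\epsilon^i z}=\abs{z}$ the region of expansion is unchanged, and $\epsilon^i$ enters only as a constant rescaling of the variable that the notation $\partial_{\epsilon^i}^{(k)}$ already absorbs.
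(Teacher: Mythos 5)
Your proof is correct. The paper itself gives no proof of this lemma (it is quoted from \cite{ACJ}), and your argument --- formal binomial/Taylor expansion of each monomial $(\epsilon^i z+z_0)^{-n-1}$, matched against the divided-power derivative, together with the observation that the $\pm$ (creation/annihilation) split is indexed by modes and therefore commutes with both the substitution and the expansion $i_{z,z_0}$ --- is precisely the standard argument used there, so there is nothing to add.
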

\begin{defn}\label{defn:fielddesc} \begin{bf}(The Field Descendants Space  $\mathbf{\mathfrak{FD} \{a^0 (z), a^1 (z),  \dots , a^p(z); N\} }$)\end{bf} \\
Let $a^0 (z), a^1 (z), \dots , a^p(z)$ be given homogeneous fields on a vector space $W$, which are self-local and pairwise $N$-point local with points of locality $1, \epsilon, \dots, \epsilon^{N-1}$. Denote by $\mathfrak{FD} \{a^0 (z), a^1(z), \dots , a_p(z); N\}$ the subspace of all fields on $W$ obtained from the fields $a^0 (z), a^1(z), \dots , a^p(z)$ as follows:
\begin{enumerate}
\item $Id_W, a^0 (z), a^1(z), \dots , a^p(z)\in \mathfrak{FD} \{ a^0 (z), a^1 (z), \dots , a^p(z); N \}$;
\item  If $d(z)\in \mathfrak{FD} \{ a^0 (z), a^1 (z), \dots , a^p(z); N \}$, then $\partial_z (d(z))\in \mathfrak{FD} \{ a^0 (z), \dots , a^p(z); N \}$;
\item  If $d(z)\in \mathfrak{FD} \{ a^0 (z), a^1 (z), \dots , a^p(z); N \}$, then $d(\epsilon^i z)$ are also elements of\\  \mbox{ $\mathfrak{FD} \{a^0 (z), a^1(z), \dots , a^p(z); N\}$} for $i=0,\dots, N-1$;
\item
If $d_1(z), d_2 (z)$ are both in \mbox{$\mathfrak{FD} \{a^0 (z), a^1(z), \dots , a^p(z); N\}$}, then $:d_1(z)d_2(z):$ is also an element of  $\mathfrak{FD} \{a^0 (z), a^1(z), \dots , a^p(z); N\}$, as well as all OPE coefficients in the OPE expansion of  $d_1(z)d_2(w)$.
\item all finite linear combinations of fields in $ \mathfrak{FD} \{ a^0 (z), a^1 (z), \dots , a^p(z); N \}$ are still in \\ $ \mathfrak{FD} \{ a^0 (z), a^1 (z), \dots , a^p(z); N \}$.
\end{enumerate}
\end{defn}

We will not remind here the definition of a twisted vertex algebra as it is rather technical, see instead  \cite{AngTVA}, \cite{ACJ}. A twisted  vertex algebra is a generalization of a super vertex algebra, in the sense that  any super vertex algebra is an $N=1$-twisted vertex algebra, and vice versa. A major difference besides the $N$-point locality is that  in twisted vertex algebras the space of fields $V$ is allowed to be strictly larger than the space of states $W$ on which the fields act (i.e., the field-state correspondence is not necessarily a bijection as for super vertex algebras, but a surjective projection). In that sense a twisted vertex algebra is more similar to a deformed chiral algebra in the sense of \cite{FR}.  We have the following construction theorem for twisted vertex algebras:
\begin{prop}\cite{ACJ}\label{prop:GenF-TVA}
Let $a^0 (z), a^1 (z), \dots a^p(z)$ be given fields on a vector space $W$, which are self-local and pairwise local with points of locality $\epsilon ^i$, $i=1, \dots, N$, where $\epsilon$ is a primitive root of unity. Then any two fields in $\mathfrak{FD} \{a^0 (z), a^1(z), \dots a_p(z); N\}$ are self and mutually $N$-point local. Further, if the fields $a^0 (z), a^1 (z), \dots a^p(z)$ satisfy the conditions for generating fields for a twisted vertex algebra (see \cite{ACJ}), then the space  $\mathfrak{FD} \{a^0 (z), a^1(z), \dots a_p(z); N\}$ has a structure of a twisted vertex algebra.
\end{prop}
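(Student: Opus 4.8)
The plan is to establish the statement in the two stages matching its two assertions. The first assertion --- that every pair of fields in $\mathfrak{FD} \{a^0(z),\dots,a^p(z); N\}$ is self and mutually $N$-point local --- I would prove by structural induction following the inductive clauses (1)--(5) of Definition \ref{defn:fielddesc}. The base case (clause (1)) is the hypothesis that the generators are pairwise and self $N$-point local, so it suffices to check that each of the four generating operations preserves mutual $N$-point locality; the second assertion then follows by feeding this closure property into the reconstruction theorem for twisted vertex algebras of \cite{ACJ}.

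For the easy closure steps, suppose $P(z,w)\,[a(z),b(w)]_{\pm}=0$ with $P(z,w)=\prod_{j=0}^{N-1}(z-\epsilon^j w)^{n_j}$, where $[\,\cdot\,,\,\cdot\,]_{\pm}$ denotes the bracket dictated by the parities as in Definition \ref{defn:parity}. Linear combinations (clause (5)) are handled by passing to a common locality polynomial, the product of the two given ones, since the bracket is bilinear. For derivatives (clause (2)), differentiating $P\,[a,b]_{\pm}=0$ in $z$ gives $(\partial_z P)[a,b]_{\pm} + P[\partial_z a, b]_{\pm}=0$; multiplying by $\prod_j(z-\epsilon^j w)$ absorbs $\partial_z P$ into a multiple of $P$ and raises each exponent of $P$ by one, yielding $N$-point locality of $\partial_z a$ with $b$. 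For the scaling $z\mapsto \epsilon^i z$ (clause (3)), the substitution sends $(z-\epsilon^j w)$ to $\epsilon^i(z-\epsilon^{j-i}w)$, so up to a nonzero constant the locality polynomial is merely re-indexed (the points $\epsilon^0,\dots,\epsilon^{N-1}$ are permuted by multiplication by $\epsilon^i$); hence $a(\epsilon^i z)$ is $N$-point local with $b(w)$ with the orders $n_j$ cyclically shifted.

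The crux is clause (4): I need the $N$-point analogue of Dong's lemma, namely that if $a,b,c$ are pairwise (and self) $N$-point local, then $:ab:$ is $N$-point local with $c$ and every OPE coefficient $c_{jk}$ of $a(z)b(w)$ from Lemma \ref{lem:OPE} is $N$-point local with $c$. I would run the standard three-field argument adapted to $N$ points: mutual locality of all three pairs means that a single polynomial $\prod_{i<\ell}\prod_{k=0}^{N-1}(z_i-\epsilon^k z_\ell)^{M}$, for $M$ large, annihilates every reordering commutator among $a(z_1),b(z_2),c(z_3)$, so the three orderings of the triple product all restrict, after multiplication by this polynomial, to one common rational function with poles only along the diagonals $z_i=\epsilon^k z_\ell$. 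Writing each $c_{jk}(w)$ as a Taylor/residue coefficient along $z_1=\epsilon^j z_2$ via Lemma \ref{lem:OPE} and the expansion of Lemma \ref{lem:normalprodexpansion}, and extracting it from this common rational function, shows that it inherits locality with $c$. The genuine obstacle is bookkeeping the $N$ distinct expansion regions $i_{z,w}$ versus $i_{w,z}$ simultaneously at all $N$ points and controlling the orders $n_j$ uniformly, so that the delta-function calculus localized at $\epsilon^j w$ closes up; this is exactly where passing from $N=1$ (ordinary Dong's lemma) to general $N$ requires care, and where I expect to spend the bulk of the effort.

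Granting the first assertion, the second is largely structural. Under the hypothesis that $a^0,\dots,a^p$ satisfy the generating-field conditions of \cite{ACJ} (a vacuum, a translation operator $T$ implementing $\partial_z$, and the surjective --- not necessarily injective --- field-state correspondence), I would verify the twisted vertex algebra axioms by matching each clause of Definition \ref{defn:fielddesc} to the corresponding structural operation on the state space: clause (2) to $T$-covariance, clause (3) to the order-$N$ symmetry, and clause (4) to the $n$-th products and the $N$-point duality/OPE axiom, whose consistency is guaranteed precisely by the mutual $N$-point locality just established. Because $\mathfrak{FD}$ is defined as a space of \emph{fields}, the possible non-injectivity of the field-state map is automatically accommodated, which is the feature distinguishing a twisted vertex algebra from a super vertex algebra; thus no work beyond invoking the \cite{ACJ} reconstruction machinery is needed in this step.
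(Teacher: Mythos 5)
A preliminary remark: this proposition is quoted in the paper from \cite{ACJ}, and no proof of it appears in the paper itself, so there is no in-paper argument to compare your proposal against; the comparison can only be with what the cited reference must contain. With that caveat, your overall strategy --- structural induction over the clauses of Definition \ref{defn:fielddesc}, with an $N$-point Dong's lemma carrying the weight at clause (4), and the second assertion deferred to the reconstruction machinery of \cite{ACJ} --- is the natural route and surely parallels the cited proof. Your treatment of the easy clauses is correct: the derivative step works because $(\partial_z P)\prod_j(z-\epsilon^j w)$ is divisible by $P$, so multiplying the differentiated identity by $\prod_j(z-\epsilon^j w)$ raises each exponent by one; the substitution $z\mapsto\epsilon^i z$ permutes the points of locality and cyclically shifts the orders $n_j$ up to a nonzero constant; and linear combinations are handled by a common locality polynomial.

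The gap, which you flag yourself, is that essentially all of the mathematical content of the first assertion sits in the $N$-point Dong's lemma, and your proposal sketches it rather than proves it. Asserting that a large common polynomial makes the three orderings of $a(z_1)b(z_2)c(z_3)$ agree, and that each OPE coefficient $c_{jk}$ ``inherits'' locality by residue extraction along $z_1=\epsilon^j z_2$, is a statement of what must be shown, not a demonstration: the delicate points are precisely (i) that the formal series, multiplied by the polynomial and evaluated on matrix coefficients, become honest Laurent polynomials so that the region-of-expansion bookkeeping ($i_{z,w}$ vs.\ $i_{w,z}$ at all $N$ points, with a Mittag-Leffler rather than a single Laurent decomposition, as the paper itself warns after Proposition \ref{prop:Heis}) closes up consistently, and (ii) that the orders can be chosen uniformly so that the extracted coefficient fields satisfy a locality identity with $c$ of the required product form. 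These are exactly the technical results of \cite{ACJ}; as a blind reconstruction your plan identifies the right skeleton, but clause (4) is not established by this outline. The deferral in the second assertion is more defensible, since the proposition itself points to \cite{ACJ} for the generating-field conditions, and your guess at those conditions (vacuum, translation covariance, order-$N$ symmetry, surjective rather than bijective field-state correspondence) matches the role they must play; still, the verification of the axioms is invoked rather than performed.
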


\section{Boson-fermion correspondence of type D-A}

  We recall the definitions and notation for the Fock space $\mathit{F^{\ten \frac{1}{2}}}$ and some brief facts about it  as in \cite{Frenkel-BF}, \cite{DJKM3}, \cite{Kac-Lie}, \cite{WangDuality}; we  will follow here in particular the notation of \cite{WangDuality}, \cite{WangDual}. In this section we construct the Heisenberg action and  the various gradings on $\mathit{F^{\ten \frac{1}{2}}}$ and decompose it as a direct sum of irreducible  Heisenberg modules. We also recall some basic facts about the Fock space $\mathit{F^{\ten 1}}$ of a pair of free charged fermions (see e.g.  \cite{Frenkel-BF}, \cite{DJKM3}, \cite{KacRaina}, \cite{Kac})

  We consider a single odd self-local field $\phi ^D(z)$
  \begin{equation}
  \phi ^D(z)=\sum _{n\in \mathbb{Z}+1/2} \phi^D_n z^{-n-1/2} =\sum _{m \in \mathbb{Z}} \phi^D_{-m-\frac{1}{2}} z^{m}
   \end{equation}
   (as in e.g., \cite{KacRaina},  \cite{Triality}, \cite{Kac}, \cite{WangDuality}, \cite{WangDual}, \cite{AngTVA}).
The OPE of $\phi ^D(z)$ is given by
\begin{equation}
\label{equation:OPE-D}
\phi ^D(z)\phi ^D(w)\sim \frac{1}{z-w}.
\end{equation}
This OPE completely determines the commutation relations between the modes $\phi^D_n$, $n\in \mathbb{Z} +1/2$:
\begin{equation}
\label{eqn:Com-D}
\{\phi^D_m,\phi^D_n\}=\delta _{m, -n}1.
\end{equation}
and so the modes generate a Clifford algebra $\mathit{Cl_D}$.
The field $\phi ^D(z)$ is usually  called in the mathematical physics literature  "neutral fermion field".

 The Fock space of the  field $\phi ^D(z)$ is the highest weight module of $\mathit{Cl_D}$ with vacuum vector $|0\rangle $, so that $\phi^D_n|0\rangle=0 \ \text{for} \  n >0$.  It is denoted  by $\mathit{F^{\ten \frac{1}{2}}}$ (see e.g.  \cite{DJKM6}, \cite{Triality}, \cite{Wang},  \cite{WangDual}, \cite{WangDuality}, \cite{WangKac}). This well known Fock space is often called the Fock space of the free neutral fermion (see e.g.   \cite{DJKM6}, \cite{Triality}, \cite{Wang}, \cite{WangKac},  \cite{WangDual}).

\begin{remark}
In \cite{AngTVA} this  neutral fermion Fock space is denoted by $\mathit{F_D}$ as it canonically decomposes into basic modules for the Lie algebra $d_{\infty}$, see \cite{DJKM6}), \cite{Wang}, \cite{WangKac}. Here we use the more popular notation $\mathit{F^{\ten \frac{1}{2}}}$ as used for example in \cite{Wang}, \cite{WangKac}, \cite{WangDual}.
\end{remark}

The space $\mathit{F^{\ten \frac{1}{2}}}$ can be given a super-vertex algebra structure (i.e., with a single point of locality at $z=w$ in the OPEs), as is well known from e.g. \cite{Triality}, \cite{Wang}, \cite{Kac}.
But this super vertex algebra has no Heisenberg element, and thus no boson-fermion correspondence could be constructed on the level of super-vertex algebras, one needs to introduce multi-locality in order to bosonize this neutral fermion Fock space (see Remark \ref{remark:F^1} later on).
As  in any boson-fermion correspondence (a.k.a bosonization), to establish the boson-fermion correspondence  of type D-A  one first constructs the Heisenberg field, and then an invertible map (as fields on   $\mathit{F^{\ten \frac{1}{2}}}$) between $\phi^D(z)$ and certain exponentiated boson fields (which by their definition require a Heisenberg field, see Remark \ref{remark:F^1}. This was done in \cite{AngTVA}, but utilizing the bicharacter construction, which although more general, requires a very different and less known set of tools. Here we will re-prove the necessary information using more standard representation theory tools, in particular we will start by proving  the existence of the Heisenberg field on $\mathit{F^{\ten \frac{1}{2}}}$, and the consequent decomposition of the space $\mathit{F^{\ten \frac{1}{2}}}$ as a direct sum of Heisenberg modules.  We want to emphasize that in contrast to the well-known charged free boson-fermion correspondence which bosonizes  the \textbf{super} vertex algebras $\mathit{F^{\ten 1}}$ ($\mathit{F^{\ten 1}}$ is the Fock space of 1 pair of charged free fermions, see Remark \ref{remark:F^1} later on), and   only requires 1-point locality  at $z=w$, in the case of $\mathit{F^{\ten \frac{1}{2}}}$ the boson-fermion correspondence of type D-A is an isomorphism of \textbf{twisted} vertex algebras, requiring multi-locality, and specifically a field descendent  space  $\mathbf{\mathfrak{FD}}\{ \phi ^D(z); N \} $ with $N\geq 2$, i.e., locality at  $z=\epsilon ^iw$, $i=0, 1, \dots, N-1$  (see \cite{AngTVA}, \cite{ACJ}).
   We will only consider the case of $N=2$ in this paper, i.e., multi-locality only at  $z=w$ and $z=-w$.

 Thus, consider the field $\phi^D(z)$ and its  $N=2$ field descendent  space  $\mathbf{\mathfrak{FD}}\{ \phi ^D(z); 2 \} $ (in particular the descendent field $\phi^D(-z)$), i.e., the fields descendants space generated by allowing $N=2$ multi-locality at  $z=w$ and $z=-w$.
 The space  $\mathbf{\mathfrak{FD}}\{ \phi ^D(z); 2 \} $ contains the normal ordered product $:\phi ^D(z)\phi ^D(-z):$ (for which the point of locality $ -1$, i.e. $z=-w$,  is not removable).
The space  $\mathbf{\mathfrak{FD}}\{ \phi ^D(z); 2 \} $ has a structure of twisted vertex algebra by Proposition \ref{prop:GenF-TVA}.
\begin{prop}\label{prop:Heis} (\cite{AngTVA})
The field $h^D(z)\in \mathbf{\mathfrak{FD}}\{ \phi ^D(z); 2 \} $ given by:
\begin{equation}
\label{eqn:normal-order-h-D}
h^D (z)= \frac{1}{2}:\phi ^D(z)\phi ^D(-z) :
\end{equation}
has only  odd-indexed modes ($h^D (z)=-h^D (-z)$),   $h^D (z)=\sum _{n\in \mathbb{Z}} h_{n} z^{-2n-1}$,  and has OPE with itself given by:
\begin{equation}
\label{eqn:HeisOPEsD}
h^D (z)h^D (w)\sim \frac{zw}{(z^2-w^2)^2} \sim \frac{1}{4}\frac{1}{(z-w)^2} - \frac{1}{4}\frac{1}{(z+w)^2}.
\end{equation}
 Hence its  modes, $h_n, \ n\in \mathbb{Z}$, generate the Heisenberg algebra $\mathcal{H}_{\mathbb{Z}}$ with relations $[h_m,h_n]=m\delta _{m+n,0}1$, \ $m,n$   integers; and we call the field  $h^D(z)$  a Heisenberg field.
 We will denote this  new representation  of $\mathcal{H}_{\mathbb{Z}}$ on $\mathit{F^{\ten \frac{1}{2}}}$ by $r_{D, H}$.
 \end{prop}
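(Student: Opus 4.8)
The plan is to verify the three claimed assertions about $h^D(z)$ in sequence: the parity of its mode expansion, the self-OPE, and then to read off the Heisenberg commutation relations from the OPE via Lemma~\ref{lem:OPE}. First I would establish that $h^D(z)=-h^D(-z)$. Since $\phi^D(z)$ is odd, the normal ordered product $:\phi^D(z)\phi^D(-z):$ is defined using the fermionic sign, and replacing $z$ by $-z$ swaps the two factors (sending $:\phi^D(z)\phi^D(-z):$ to $:\phi^D(-z)\phi^D(z):$); using the odd parity and Definition~\ref{defn:normalorder} one checks the resulting field is the negative of the original. This immediately forces the expansion of $h^D(z)$ to contain only odd powers of $z$, justifying the form $h^D(z)=\sum_{n\in\mathbb{Z}}h_n z^{-2n-1}$.

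The main computational step is the self-OPE \eqref{eqn:HeisOPEsD}. The natural tool is the multi-local Wick theorem (referenced from \cite{ACJ}) applied to the product $:\phi^D(z)\phi^D(-z)::\phi^D(w)\phi^D(-w):$, together with the basic contraction coming from \eqref{equation:OPE-D}, namely $\phi^D(u)\phi^D(v)\sim \frac{1}{u-v}$. There are several pairwise contractions among the four fields, at arguments $z,-z,w,-w$; the nonzero double contractions produce terms such as $\frac{1}{(z-w)^2}$ and $\frac{1}{(z+w)^2}$, and one must track the fermionic signs and the factor $\frac14$ from the two prefactors of $\frac12$. After collecting terms I expect the singular part to assemble into $\frac{1}{4(z-w)^2}-\frac{1}{4(z+w)^2}$, which equals $\frac{zw}{(z^2-w^2)^2}$ upon combining over the common denominator. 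I would also note that the single-contraction terms either cancel by the parity/symmetry of the configuration or are absorbed into the normal ordered remainder, leaving no first-order pole.

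Finally, I would extract the mode relations. Given the OPE $h^D(z)h^D(w)\sim \frac14\frac{1}{(z-w)^2}-\frac14\frac{1}{(z+w)^2}$, Lemma~\ref{lem:OPE} with $N=2$ and points of locality $1,-1$ determines $[h^D(z),h^D(w)]$ as a sum of derivatives of delta-type distributions supported at $z=\pm w$. Substituting $h^D(z)=\sum_n h_n z^{-2n-1}$ and comparing coefficients of monomials in $z$ and $w$, the two poles combine so that the contributions at $z=-w$ reinforce those at $z=w$ on the odd-indexed modes, yielding $[h_m,h_n]=m\,\delta_{m+n,0}$. The main obstacle I anticipate is bookkeeping: keeping the fermionic signs, the doubling of arguments ($z\mapsto z^2$ in the locality structure), and the relative sign between the two poles consistent throughout, so that the normalization emerges as exactly $m\delta_{m+n,0}$ rather than a rescaled Heisenberg relation. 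The change of variables that sends the $2$-point locality at $z=\pm w$ to ordinary locality in $z^2,w^2$ is what ultimately guarantees a genuine (untwisted) Heisenberg algebra, and I would use that viewpoint to organize the coefficient comparison cleanly.
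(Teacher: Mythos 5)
Your proposal follows essentially the same route as the paper's proof: Wick's theorem applied to $:\phi^D(z)\phi^D(-z):\,:\phi^D(w)\phi^D(-w):$, with the double-contraction terms assembling into $\tfrac{1}{4}\tfrac{1}{(z-w)^2}-\tfrac{1}{4}\tfrac{1}{(z+w)^2}$ and the Heisenberg relations then read off from the OPE. The only point to tighten is your hedge that the single-contraction terms ``either cancel by parity/symmetry or are absorbed into the normal ordered remainder'': singular terms cannot be absorbed into the regular part; the paper's mechanism is to Taylor-expand their normal-ordered coefficients at the pole (Lemma \ref{lem:normalprodexpansion}) and use $:\phi^D(w)\phi^D(w):\,=0$ (which is indeed the fermionic-antisymmetry fact your parity remark gestures at), so each singular coefficient vanishes at its pole and contributes nothing to the OPE.
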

In \cite{AngTVA} we gave a proof of this statement using the bicharacter construction. Here we will prove it by Wick's theorem, see e.g. \cite{MR85g:81096}, \cite{MR99m:81001} (the technique  most often used in vertex algebras).
\begin{proof}
From Wick's theorem, writing only the singular OPE terms, we have
\begin{align*}
:\phi ^D(z) \phi ^D(-z): :\phi ^D(w)\phi ^D(-w):
&\sim -\frac{1}{z-w}:\phi ^D(-z)\phi ^D(-w): +\frac{1}{z+w}:\phi ^D(-z)\phi ^D(w): \\
&\quad+\frac{-1}{z+w}:\phi ^D(z)\phi ^D(-w):-\frac{-1}{z-w}:\phi ^D(z)\phi ^D(w): \\
&\quad -\frac{1}{z-w}\cdot \frac{-1}{z-w}+\frac{1}{z+w}\cdot \frac{-1}{z+w}
\end{align*}
Now we apply Taylor expansion formula Lemma \ref{lem:normalprodexpansion}:
\begin{align*}
:\phi ^D(z) \phi ^D(-z): :\phi ^D(w)\phi ^D(-w):
&\sim -\frac{1}{z-w}:\phi ^D(-w)\phi ^D(-w): +\frac{1}{z+w}:\phi ^D(w)\phi ^D(w): \\
&\quad+\frac{-1}{z+w}:\phi ^D(-w)\phi ^D(-w):-\frac{-1}{z-w}:\phi ^D(w)\phi ^D(w): \\
&\quad -\frac{1}{z-w}\cdot \frac{-1}{z-w}+\frac{1}{z+w}\cdot \frac{-1}{z+w}.
\end{align*}
The other terms from the Taylor expansion will produce regular (nonsingular)  terms and thus do not contribute to  the OPE.
It is easily checked that $:\phi ^D(w)\phi ^D(w):=0$ (see e.g. \cite{ACJ}), and we have the desired result. That completes the proof, as the commutation relations between the modes follow directly from the OPE.
\end{proof}
It is also easily seen from continuing the calculation above (attention should be paid that one needs to use here in the multi-local case the Mittag-Leffler expansion (i.e., isolate all the singularities first), not the Laurent expansion in powers of $(z-w)$ only),  that \footnote{This field is a good example for  the nonassociativity of the normal ordered product, as
 $:h^D (z)h^D (z):=\\ :(:\phi ^D(z)\phi ^D(-z):)(:\phi ^D(z)\phi ^D(-z):):$, we have $:\phi ^D(z)\phi ^D(-z):=-:\phi ^D(-z)\phi ^D(z):$, and $:\phi ^D(z)\phi ^D(z):=\\:\phi ^D(-z)\phi ^D(-z):=0$; but nevertheless $:h^D (z)h^D (z):\neq 0$. It is also a good example of the case of shifts (the powers of $z$ adjustment) appearing in a twisted vertex algebra, see \cite{AngTVA}, \cite{ACJ}.}
\begin{equation}
\label{eqn:h-normorder}
:h^D(w)^2:= :h^D(w)h^D(w):=\frac{1}{4}:(\partial_{-w}\phi ^D(-w))\phi ^D(-w): + \frac{1}{4}:(\partial_{w}\phi ^D(w))\phi ^D(w): -\frac{1}{2w}h^D(w)
\end{equation}

Here we have to take a detour and   recall some  information about the best known and and often used charged free boson-fermion correspondence (of type A), its Fock space $\mathit{F^{\ten 1}}$, and some of the well established connections between the neutral fermion Fock space   $\mathit{F^{\ten \frac{1}{2}}}$ and the charged free fermion Fock space $\mathit{F^{\ten 1}}$. $\mathit{F^{\ten 1}}$ is the Fock space of 1 pair of two charged fermions (see e.g. \cite{Frenkel-BF}, \cite{DJKM3}, \cite{Kac}). The two charged fermions are  the fields $\psi^+ (z)$ and $\psi^- (z)$ with only nontrivial operator product expansion (OPE):
\begin{align*}
\psi^+ (z)\psi^- (w)\sim \frac{1}{z-w}\sim \psi ^-(z)\psi ^+(w),
\end{align*}
where the $1$ above denotes the identity map $Id_{\mathit{F^{\ten 1}}}$.
The modes $\psi^+_n$ and $\psi^-_n$, $n\in \mathbb{Z}$ of the fields  $\psi^+ (z)$ and $\psi^- (z)$, which we index as follows (\cite{Kac}):
\begin{equation}
\psi ^+(z) =\sum _{n\in \mathbf{Z}} \psi^+_{n} z^{-n-1}, \quad \psi^- (z) =\sum _{n\in \mathbf{Z}} \psi^-_n z^{-n-1},
\end{equation}
form a Clifford algebra $\mathit{Cl_A}$  with relations
\begin{equation}
\{\psi^+_m,\psi^-_n\}=\delta _{m+n, -1}1, \quad \{\psi^+_m,\psi^+_n\}=\{\psi^-_m,\psi^-_n\}=0.
\end{equation}
The Fock space $\mathit{F^{\ten 1}}$ is the highest weight   representation of $\mathit{Cl_A}$ generated by the vacuum  vector $|0\rangle $,  so that $\psi^+_n|0\rangle=\psi^-_n|0\rangle=0 \ \text{for} \  n\geq 0$. \\
It is well known (see e.g., \cite{Kac}, \cite{FZvi}, \cite{LiLep}) that $\mathit{F^{\ten 1}}$ has a structure of a super vertex algebra (i.e., with a single point of locality at $z=w$ in the OPEs); this vertex algebra  is often called  "charged free fermion vertex algebra". It is also well known (introduced by Igor Frenkel) and extensively used (e.g., \cite{Frenkel-BF}, \cite{DJKM3}, \cite{Kac-Lie}, \cite{WangDuality}, \cite{WangDual} among others) that  $\mathit{F^{\ten 1}}$ is canonically decomposed into basic  modules for the $a_{\infty}$ algebra (hence in \cite{AngTVA} we used the notation $\mathit{F^A}$ instead of $\mathit{F^{\ten 1}}$), and as a direct sum of irreducible Heisenberg $\mathcal{H}_{\mathbb{Z}}$ modules. In the case of $\mathit{F^{\ten 1}}$ the boson-fermion correspondence (of type A) is an isomorphism of super vertex algebras, requiring locality only at $z=w$ (see e.g., \cite{Kac}).  In particular, the well known charged free boson-fermion correspondence is an isomorphism between the super vertex algebra  $\mathit{F^{\ten 1}}$ and the super vertex algebra of the rank one odd lattice vertex algebra $\mathit{B_A}$, with generating fields $e^{\alpha}_A(z)$ and $e^{-\alpha}_A(z)$ (see e.g. \cite{Frenkel-BF}, \cite{DJKM3}, \cite{Kac}):
\begin{align}
\label{eqn:ExponA-1}
e^{\alpha}_A(z)&=\exp (\sum _{n\ge 1}\frac{h^A_{-n}}{n} z^n)\exp (-\sum _{n\ge 1}\frac{h^A_{n}}{n} z^{-n})e^{\alpha}z^{\partial_{\alpha}},\\
\label{eqn:ExponA-2}
e^{-\alpha}_A(z)&=\exp (-\sum _{n\ge 1}\frac{h^A_{-n}}{n} z^n)\exp (\sum _{n\ge 1}\frac{h^A_{n}}{n} z^{-n})e^{-\alpha}z^{-\partial_{\alpha}}.
\end{align}
Two important notes: first,  the above expressions depend on the existence of the  Heisenberg field $h^A(z)=\sum h^A_n z^{-n-1}$, whose modes $h^A_n$ generate the Heisenberg algebra $\mathcal{H}_{\mathbb{Z}}$ and are used in the exponentiated boson fields  $e^{\alpha}(z)$ and $e^{-\alpha}(z)$. In particular, in the case of the charged free boson-fermion correspondence (of type A), the Heisenberg field $h^A(z)$ is constructed from the fermionic fields as (see \cite{Frenkel-BF}, \cite{DJKM3}, \cite{KacRaina}, \cite{Kac})
\begin{equation}
\label{HeisAdepen1}
h^A(z):=:\psi^+ (z)\psi^- (z):.
\end{equation}
Second, the boson-fermion correspondence is the isomorphism $\mathit{F^{\ten 1}}\cong \mathit{B_A}$ of super vertex algebras generated by the invertible map
\begin{equation}
\psi^+(z)\mapsto e^{\alpha}_A(z), \quad \psi^-(z)\mapsto e^{-\alpha}_A(z).
\end{equation}
To establish this isomorphism, then, there are  two important necessary ingredients: first, one needs the  Heisenberg field $h^A(z)$ in terms of the generating fields on $\mathit{F^{\ten 1}}$, and second, the  charged decomposition of $\mathit{F^{\ten 1}}$ as a direct sum of irreducible Heisenberg modules. In particular (\cite{DJKM3}, \cite{KacRaina})
\begin{equation}\label{Aequiv}
\mathit{F^{\ten 1}}\cong\oplus _{m\in \mathbb{Z}} \mathit{F_{(m)}^{\ten 1}}\cong \oplus _{m\in \mathbb{Z}} B_m,  \quad \text{where} \quad \mathit{F_{(m)}^{\ten \frac{1}{2}}}\cong B_m,\quad
B_m \cong  \mathbb{C}[x_1, x_2, \dots , x_n, \dots ],\quad \forall \ m\in \mathbb{Z}.
\end{equation}
Here $\mathit{F_{(m)}^{\ten 1}}$ is  the subspace of charge $m$, see \cite{KacRaina} (we changed the notation slightly), and it is a
 well known fact (see e.g. \cite{KacRaina}, \cite{FLM}) that any irreducible highest weight module of the Heisenberg
algebra $\mathcal{H}_{\mathbb{Z}}$ is isomorphic to the polynomial algebra with infinitely many variables $\mathit{B_\lambda}\cong \mathbb{C}[x_1,
x_2, \dots , x_n, \dots ]$ via the action:
\begin{equation}
h_n\mapsto \partial _{x_{n}}, \quad h_{-n} \mapsto
nx_n\cdot, \quad \text{for any} \ \ n\in \mathbb{N}, \quad h_0\mapsto \lambda\cdot.
\end{equation}
In fact we can introduce an arbitrary re-scaling $s_n\neq 0, \ s_n\in \mathbb{C}$, for $n\neq 0$ only, so that
\begin{equation}
h_n\mapsto s_n\partial _{x_{n}}, \quad h_{-n} \mapsto
s_n^{-1}nx_n\cdot, \quad \text{for any} \ \ n\in \mathbb{N}, \quad h_0\mapsto \lambda\cdot.
\end{equation}
Here we will build the boson-fermion correspondence of type D-A, which is an isomorphism of twisted vertex algebras, by constructing these two ingredients. We already established the existence of the Heisenberg field $h^D(z)$ on $\mathbf{\mathit{F^{\ten \frac{1}{2}}}}$, next we will prove the analogue of the charged decomposition for $\mathbf{\mathit{F^{\ten \frac{1}{2}}}}$.

\begin{remark}\label{remark:F^1}
We want to address here a common misconception that has been brought to our attention by certain  comments we received.
It is well known (introduced first by Igor Frenkel in \cite{Frenkel-BF} and extensively used afterwards) that
\begin{equation*}
\mathit{F^{\ten \frac{1}{2}}}\ten \mathit{F^{\ten \frac{1}{2}}} \cong \mathit{F^{\ten 1}};\quad \text{via}
\quad \frac{\phi^{D, 1}(z) +\phi^{D, 2}(z)}{\sqrt{2}}\mapsto \psi^+(z), \quad \frac{\phi^{D, 1}(z) -\phi^{D, 2}(z)}{\sqrt{2}}\mapsto \psi^-(z);
\end{equation*}
here the field $\phi^{D, 1}(z)$ ($\phi^{D, 2}(z)$) is identified as the field acting on the first (correspondingly second) component of $\mathit{F^{\ten \frac{1}{2}}}\ten \mathit{F^{\ten \frac{1}{2}}}\cong \mathit{F^{\ten 1}}$. But that doesn't mean that this identification gives a bosonization of $\mathbf{\mathit{F^{\ten \frac{1}{2}}}}$, for two reasons as the notes above indicate. First, this identification does \textbf{not}  produce a Heisenberg free boson field on $\mathbf{\mathit{F^{\ten \frac{1}{2}}}}$, i.e., in terms of the generating field $\phi^D(z)$, and consequently one can't write exponentiated boson fields, as the Heisenberg field is part of their definition as in \eqref{eqn:ExponA-1} and \eqref{eqn:ExponA-2}.
Second, the fact that one can write the field $\phi^{D, 1}(z)$ as $\phi^{D, 1}(z)=\frac{1}{\sqrt{2}}(e^{\alpha}_A(z)+e^{-\alpha}_A(z))$ on the space $\mathit{F^{\ten 1}}$ doesn't mean that one can write $\phi^D(z)= \frac{1}{\sqrt{2}}(e^{\alpha}_A(z)+e^{-\alpha}_A(z))$  on the space
$\mathbf{\mathit{F^{\ten \frac{1}{2}}}}$, as  the fields $e^{\alpha}(z)$ and $e^{-\alpha}(z)$    are not fields (do not act) on each component $\mathbf{\mathit{F^{\ten \frac{1}{2}}}}$ separately, but on $\mathit{F^{\ten 1}} \cong  \mathit{F^{\ten \frac{1}{2}}}\ten \mathit{F^{\ten \frac{1}{2}}}$!  Thus, both required ingredients for bosonization are unobtainable  if one just  uses the \textbf{super vertex algebra structure} on $\mathbf{\mathit{F^{\ten \frac{1}{2}}}}$. There is no Heisenberg field at all in the \textbf{super vertex algebra} $\mathbf{\mathfrak{FD}}\{ \phi ^D(z); 1 \} $ generated by $\phi^D(z)$.  Consequently, even though there is a well-established super vertex algebra isomorphism  $\mathit{F^{\ten 1}} \cong  \mathit{F^{\ten \frac{1}{2}}}\ten \mathit{F^{\ten \frac{1}{2}}}$, that doesn't bosonise the Fock space  $\mathbf{\mathit{F^{\ten \frac{1}{2}}}}$.
As we saw in Proposition \ref{prop:Heis}, one needs to go to the \textbf{twisted vertex algebra} descendant fields $\mathbf{\mathfrak{FD}}\{ \phi ^D(z); N \} $, $N\geq 2$ to construct the field $h^D(z)$, which  is 2-point local at both $z=w$ and $z=-w$.
  Thus in contrast to the charged free boson-fermion correspondence between the \textbf{super} vertex algebras $\mathit{F^{\ten 1}}$ and $\mathit{B_A}$  (which only requires locality only at $z=w$), in the case of $\mathit{F^{\ten \frac{1}{2}}}$, as we  recall from \cite{AngTVA}, the boson-fermion correspondence of type D-A which bosonizes $\mathit{F^{\ten \frac{1}{2}}}$ is an isomorphism of \textbf{twisted} vertex algebras, requiring multi-locality, at least  at both $z=w$ and $z=-w$ (see \cite{AngTVA}, \cite{ACJ}).
\end{remark}

Although some properties of the Heisenberg field $h^D(z)$ (Proposition \ref{prop:Heis}) on $\mathit{F^{\ten \frac{1}{2}}}$ are similar to the properties of the Heisenberg field $h^A(z)$ on $\mathit{F^{\ten 1}}$, certainly in terms of the Lie algebra structure of the modes, there are differences, some of which less obvious than the multi-locality.
In particular, in the case of the charged free boson-fermion correspondence (of type A), for the Heisenberg field $h^A(z)$, \eqref{HeisAdepen1},  one has
 (see e.g. \cite{Kac}):
\begin{equation}
\label{HeisAdepen2}
:h^A(w)^2:= :h^A(w)h^A(w):=:(\partial_w \psi^+ (w))\psi^- (w): +:(\partial_w \psi^- (w))\psi^+ (w):.
\end{equation}
Hence in the case of type A (usual 1 point locality at $z=w$) the field $:h^A(w)^2:$ is a linear combination of the only two independent conformal weight two fields $:(\partial_w \psi^+ (w))\psi^- (w):$ and $:(\partial_w \psi^- (w))\psi^+ (w):$ (the generating fields $ \psi^+ (w)$ and $\psi^- (w)$ are each of weight  $\frac{1}{2}$). In contrast, for the type D (multi-locality at $z=w$ and $z=-w$) we have the 4 standard  weight 2 fields (the generating field $\phi ^D(w)$ is of weight $\frac{1}{2}$):
 \begin{equation}
:\partial_{z} \phi ^D (z) \phi ^D(z):,\  :\partial_{-z} \phi ^D (-z) \phi ^D(-z):, \ :\partial_{z} \phi ^D (z) \phi ^D(-z):,\  :\partial_{-z} \phi ^D (-z) \phi ^D(z):.
\end{equation}
In particular, for $:h^D(w)^2:$ we have \eqref{eqn:h-normorder}.  But for $\partial_zh^D (z)$ we have
\begin{equation}\label{eqn:hderiv}
\partial_zh^D (z)= \frac{1}{2}\left(:\partial_{z} \phi ^D (z) \phi ^D(-z): +:\partial_{-z} \phi ^D (-z) \phi ^D(z):\right).
\end{equation}
There are of course infinitely many weight 2 fields on $\mathit{F^{\ten \frac{1}{2}}}$, as we saw already in \eqref{eqn:h-normorder},  but the 4 standard fields above also have the special property of having 4th order poles in their OPEs, which will be needed for the construction of the Virasoro fields in the next section. One has to use caution when dealing with the multi-local fields, especially in view of the shifts (the powers of $z$ corrections) that could appear, as we already saw in \eqref{eqn:h-normorder}.

We have established that  $\mathit{F^{\ten \frac{1}{2}}}$ has a Heisenberg field $h^D(z)$, and hence an action of  the Heisenberg algebra $\mathcal{H}_{\mathbb{Z}}$.  If we introduce a normal ordered product $:\phi^D_m \phi^D_n:$ on the modes $\phi^D_m$ of the field $\phi^{D}(z)$, compatible with the normal ordered product of fields (Definition \ref{defn:normalorder}), we have to have
\begin{equation}
\label{eqn:normord}
:\phi^{D}(z)\phi^{D}(w): =\sum _{m, n\in \mathbb{Z}+1/2} :\phi^D_m \phi^D_n: z^{-m-1/2}w^{-m-1/2}= \sum _{m,n\in \mathbb{Z}} :\phi^D_{-m-\frac{1}{2}} \phi^D_{-n-\frac{1}{2}}:z^{m}w^{n},
\end{equation}
and thus for $m, n\in \mathbb{Z}$ we have
\begin{align}
\label{eqn:normord1}
:\phi^D_{-m-\frac{1}{2}}& \phi^D_{-n-\frac{1}{2}}: =\phi^D_{-m-\frac{1}{2}}\phi^D_{-n-\frac{1}{2}} \quad \text{for}\  m+n\neq 1\\
\label{eqn:normord1}
:\phi^D_{-m-\frac{1}{2}}& \phi^D_{-n-\frac{1}{2}}: =\phi^D_{-m-\frac{1}{2}}\phi^D_{-n-\frac{1}{2}} -1=-\phi^D_{-n-\frac{1}{2}}\phi^D_{-m-\frac{1}{2}} \quad \text{for}\ m+n= -1, n\geq 0,\\
\label{eqn:normord1}
:\phi^D_{-m-\frac{1}{2}}& \phi^D_{-n-\frac{1}{2}}: =\phi^D_{-m-\frac{1}{2}}\phi^D_{-n-\frac{1}{2}} \quad \text{for}\ m+n= -1, m\geq 0.
\end{align}
Hence we can express the modes of the field  $h^D (z)=\sum _{n\in \mathbb{Z}} h_{n} z^{-2n-1}$ as follows
\begin{equation}
\label{eq:Heismodes}
h_n=\frac{1}{2}\sum _{p\in \mathbb{Z}+1/2}(-1)^{p-1/2}:\phi^D_{p} \phi^D_{2n-p}: =\frac{1}{2}\sum _{i\in \mathbb{Z}}(-1)^{i+1}:\phi^D_{-i-\frac{1}{2}} \phi^D_{2n+1+ i-\frac{1}{2}}:
\end{equation}
In order to decompose $\mathit{F^{\ten \frac{1}{2}}}$ as a direct sum of irreducible highest modules for $\mathcal{H}_{\mathbb{Z}}$ we need to introduce various gradings on  $\mathit{F^{\ten \frac{1}{2}}}$.
Since $\mathit{F^{\ten \frac{1}{2}}}$  is the highest weight module of $\mathit{Cl_D}$, defined as above,
$\mathit{F^{\ten \frac{1}{2}}}$  has basis
\begin{equation}
\{ \phi^D_{-n_k-\frac{1}{2}}\dots \phi^D_{-n_2-\frac{1}{2}}\phi^D_{-n_1-\frac{1}{2}}|0\rangle, |0\rangle \  \arrowvert \ \ n_k>\dots >n_2>n_1\geq 0, \ n_i\in \mathbb{Z}, \ i=1, 2, \dots, k\}
\end{equation}
The space $\mathit{F^{\ten \frac{1}{2}}}$ has  a $\mathbb{Z}_2$ grading given by  $ k\  mod\  2$ (it is often used see e.g. \cite{WangKac},  \cite{WangDuality}, \cite{WangDual}, as the super-grading for the super vertex algebra structure),
\[
\mathit{F^{\ten \frac{1}{2}}}=\mathit{F_{\bar{0}}^{\ten \frac{1}{2}}}\oplus \mathit{F_{\bar{1}}^{\ten \frac{1}{2}}},
\]
where $\mathit{F_{\bar{0}}^{\ten \frac{1}{2}}}$ (resp. $\mathit{F_{\bar{1}}^{\ten \frac{1}{2}}}$) denote the even (resp. odd) components of $\mathit{F^{\ten \frac{1}{2}}}$. This $\mathbb{Z}_2$ grading can be extended  to a $\mathbb{Z}_{\geq 0}$ grading $\tilde{L}$, called "length", by setting
\begin{equation}
\tilde{L} (\phi^D_{-n_k-\frac{1}{2}}\dots \phi^D_{-n_2-\frac{1}{2}}\phi^D_{-n_1-\frac{1}{2}}|0\rangle)=k.
\end{equation}

We introduce a second, $\mathbb{Z}$, grading $dg$ on $\mathit{F^{\ten \frac{1}{2}}}$ by assigning $dg(|0\rangle)=0$ and
\begin{equation*}
dg(\phi^D_{-n_k-\frac{1}{2}}\dots \phi^D_{-n_2-\frac{1}{2}}\phi^D_{-n_1-\frac{1}{2}}|0\rangle)=\#\{i=1, 2, \dots, k|n_i=\text{odd}\}\ -\#\{i=1, 2, \dots, k|n_i=\text{even}\}.
\end{equation*}
Denote the homogeneous component of degree $dg=n \in \mathbb Z$ by $\mathit{F_{(n)}^{\ten \frac{1}{2}}}$, hence as vector spaces we have
\begin{equation}
\mathit{F^{\ten \frac{1}{2}}} = \oplus _{n\in \mathbb{Z}} \mathit{F_{(n)}^{\ten \frac{1}{2}}}.
\end{equation}
Introduce  also the special vectors $v_n\in  \mathit{F_{(n)}^{\ten \frac{1}{2}}}$ defined by
\begin{align}
&v_0=|0\rangle \in  \mathit{F_{(0)}^{\ten \frac{1}{2}}};\\
&v_n=\phi^D_{-2n+1-\frac{1}{2}}\dots \phi^D_{-3-\frac{1}{2}}\phi^D_{-1-\frac{1}{2}}|0\rangle \in  \mathit{F_{(n)}^{\ten \frac{1}{2}}}, \quad \text{for}\ n>0;\\
&v_{-n}=\phi^D_{-2n+2-\frac{1}{2}}\dots \phi^D_{-2-\frac{1}{2}}\phi^D_{-\frac{1}{2}}|0\rangle\in  \mathit{F_{(-n)}^{\ten \frac{1}{2}}}, \quad \text{for}\ n>0.
\end{align}
Note that the vectors $v_n\in \mathit{F_{(n)}^{\ten \frac{1}{2}}}$ have minimal length $\tilde{L}=|n|$  among the vectors within $\mathit{F_{(n)}^{\ten \frac{1}{2}}}$, and they are in fact the unique (up-to a scalar) vectors minimizing  the length $\tilde{L}$, such that the index $n_k$ is minimal too ($n_k$ is identified from the smallest  index appearing). One can view each of the vectors $v_n$ as a vacuum-like vector in $\mathit{F_{(n)}^{\ten \frac{1}{2}}}$, and the $dg$ grading as the analogue of the "charge" grading in $\mathit{F_{(n)}^{\ten 1}}$.

Note also that the super-grading derived from the second grading $dg$ is compatible with the super-grading derived from the first grading $\tilde{L}$,  $dg \ mod\  2 =\tilde{L}\  mod\  2$.

Lastly, we want to also define a grading $deg_h$ on each of the components $\mathit{F_{(n)}^{\ten \frac{1}{2}}}$ for each $n\in \mathbb{Z}$. Consider a monomial vector $v=\phi^D_{-n_k-\frac{1}{2}}\dots \phi^D_{-n_2-\frac{1}{2}}\phi^D_{-n_1-\frac{1}{2}}|0\rangle$ from $\mathit{F_{(n)}^{\ten \frac{1}{2}}}$. One can view this vector as an "excitation" from the vacuum-like vector $v_n$, and count the $n_i$ that  should have been in $v$ as compared to $v_n$, and the also the  $n_i$ that  should not have been in $v$ as compared to $v_n$. Thus the grading $deg_h$ (one can think of it as "energy") is defined as
\begin{align*}
deg_h (v)& =\sum \{\floor[\big ]{\frac{n_l+1}{2}}\  |\  n_l\  \text{that \ should \ have \ been \ there\ but \ are \ not}\}\\
&\quad +\sum \{\floor[\big ]{\frac{n_l+1}{2}}\ |\ n_l\ \text{that \ should \ not have \ been \ there\ but\ are}\};
\end{align*}
here $\floor[\big ]{x}$ denotes the floor function, i.e.,   $\floor[\big ]{x}$ denotes the largest integer smaller or equal to  $x$.

Example: consider $\phi^D_{-\frac{5}{2}}|0\rangle$. We have $\tilde{L}(\phi^D_{-\frac{5}{2}}|0\rangle)=1$, and
\[
\phi^D_{-\frac{5}{2}}|0\rangle=\phi^D_{-2-\frac{1}{2}}|0\rangle
\]
and so  $dg(\phi^D_{-\frac{5}{2}}|0\rangle)=-1$, $\phi^D_{-\frac{5}{2}}|0\rangle\in \mathit{F_{(-1)}^{\ten \frac{1}{2}}}$. The vacuum-like vector of $\mathit{F_{(-1)}^{\ten \frac{1}{2}}}$ is $v_{-1}=\phi^D_{-\frac{1}{2}}|0\rangle$. Hence
\[
deg_h (\phi^D_{-\frac{5}{2}}|0\rangle) =\floor[\big ]{\frac{2+1}{2}} +\floor[\big ]{\frac{0+1}{2}}=1.
\]
Another example: consider $v=\phi^D_{-\frac{11}{2}}\phi^D_{-\frac{7}{2}}\phi^D_{-\frac{1}{2}}|0\rangle$. We have $\tilde{L}(v)=3$, and
\[
v=\phi^D_{-5-\frac{1}{2}}\phi^D_{-3-\frac{1}{2}}\phi^D_{-0-\frac{1}{2}}|0\rangle
\]
and so  $dg(v)=1$, $v\in \mathit{F_{(1)}^{\ten \frac{1}{2}}}$.  The vacuum-like vector in $\mathit{F_{(1)}^{\ten \frac{1}{2}}}$ is $v_{1}=\phi^D_{-1-\frac{1}{2}}|0\rangle$, and so finally,
\[
deg_h (v) =\floor[\big ]{\frac{5+1}{2}} +\floor[\big ]{\frac{3+1}{2}}+ \floor[\big ]{\frac{0+1}{2}}+ \floor[\big ]{\frac{1+1}{2}}=4.
\]
We want to note the following fact, which is similar to Proposition 4.1 of \cite{KacRaina}:
\begin{lem}\label{lem:countdeg}
Denote by  $\mathit{F_{(n, k)}^{\ten \frac{1}{2}}}$ the linear span of all vectors of grade $deg_h =k$ in $\mathit{F_{(n)}^{\ten \frac{1}{2}}}$. We have
\begin{itemize}
\item $\mathit{F_{(n, 0)}^{\ten \frac{1}{2}}}=\mathbb{C}v_n$;
\item $\mathit{F_{(n)}^{\ten \frac{1}{2}}}=\oplus_{k\in \mathbb{Z}_{+}}\mathit{F_{(n, k)}^{\ten \frac{1}{2}}}$;
\item $dim \mathit{F_{(n, k)}^{\ten \frac{1}{2}}}=p(k)$, where $p(k)$ stands for the number of partitions of $k\geq 0$ into a sum of non-increasing positive integers, with $p(0)=1$.
\end{itemize}
\end{lem}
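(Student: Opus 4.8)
The plan is to reduce everything to a combinatorial count on finite subsets of $\mathbb{Z}_{\geq 0}$. A monomial basis vector $\phi^D_{-n_k-\frac12}\cdots\phi^D_{-n_1-\frac12}|0\rangle$ is recorded by its index set $S=\{n_1,\dots,n_k\}\subseteq\mathbb{Z}_{\geq 0}$, and I split $S=S_{\mathrm{odd}}\sqcup S_{\mathrm{even}}$. Writing each odd index $2j-1$ as $j\in\mathbb{Z}_{>0}$ and each even index $2j$ as $j\in\mathbb{Z}_{\geq 0}$, the weight $\floor{\frac{n_l+1}{2}}$ becomes exactly the label $j$ on both sub-lattices, so $dg(S)=\abs{S_{\mathrm{odd}}}-\abs{S_{\mathrm{even}}}$ and $deg_h(S)=\sum_{j\in S\triangle S_n} j$, the label-weighted size of the symmetric difference with the index set $S_n$ of $v_n$. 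Two observations then dispatch the first two bullets. Since the only index of weight $0$ is $n_l=0$ while every other index has weight $\geq 1$, the equation $deg_h(S)=0$ forces $S\triangle S_n\subseteq\{0\}$; the charge condition $dg(S)=n$ rules out $S\triangle S_n=\{0\}$ (adding or removing the even index $0$ shifts $dg$ by $1$), leaving $S=S_n$, i.e. $F^{\otimes\frac12}_{(n,0)}=\mathbb{C}v_n$. Because $deg_h$ is a non-negative integer on each basis vector, grouping the basis of $F^{\otimes\frac12}_{(n)}$ by the value of $deg_h$ yields the claimed direct sum decomposition.

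For the dimension formula I would compute the generating function $\sum_k \dim F^{\otimes\frac12}_{(n,k)}\,q^k$ and identify it with $\prod_{j\geq 1}(1-q^j)^{-1}=\sum_k p(k)q^k$. I organize the sum by the charge $a=\abs{S_{\mathrm{odd}}}$, so that $\abs{S_{\mathrm{even}}}=a-n$. Over all size-$(a-n)$ subsets of $\mathbb{Z}_{\geq 0}$ weighted by $q^{\sum j}$, the even part contributes a factor $q^{\binom{a-n}{2}}/(q)_{a-n}$, where $(q)_m:=\prod_{i=1}^m(1-q^i)$; this is the standard bijection between such subsets and partitions with at most $a-n$ parts. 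The odd part contributes the analogous sum over size-$a$ subsets of $\mathbb{Z}_{>0}$, but with the energy measured by the symmetric difference against $\{1,\dots,n\}$. Assembling the two factors and summing over $a$ should produce a Durfee-square type series; in the cleanest case $n=0$ both reference sets are empty and one gets exactly
\[
\sum_{a\geq 0}\frac{q^{a^2}}{(q)_a^2}=\prod_{j\geq 1}\frac{1}{1-q^j},
\]
the classical Durfee-square identity (the $a\times a$ square contributes $q^{a^2}$, and the partitions to its right and below each contribute $1/(q)_a$). The general $n$ case is meant to follow the same pattern via the corresponding Durfee-rectangle identity, in parallel with Proposition 4.1 of \cite{KacRaina}.

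The hard part is precisely the odd-lattice bookkeeping for $n\neq 0$: there the symmetric-difference energy against the ``Fermi level'' $\{1,\dots,n\}$ decomposes into particles sitting above level $n$ and holes below it, and one must verify that, after multiplying by the even-lattice factor and summing over the charge $a$, these reassemble into the partition generating function independently of $n$. I would handle this either by proving the relevant $q$-series identity directly (a Durfee-rectangle refinement of the identity above), or, more transparently, by constructing an explicit energy-preserving bijection from the monomial basis of $F^{\otimes\frac12}_{(n,k)}$ onto the partitions of $k$ via Frobenius/Durfee coordinates, so that all three bullets are read off at once. Care is needed throughout with the even index $0$, which carries weight $0$ and is the sole source of the charge shift distinguishing the odd and even sub-lattices; keeping track of it correctly is what makes the minimality statement for $v_n$ and the final count consistent.
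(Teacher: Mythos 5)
Your treatment of the first two bullets is correct (and more careful than the paper, which calls them obvious), and your $n=0$ computation is complete and correct. Your primary route is genuinely different from the paper's: you compute the charge-graded generating function and invoke the Durfee-square identity $\sum_{a\geq 0}q^{a^2}/(q)_a^2=\prod_{j\geq 1}(1-q^j)^{-1}$, whereas the paper (which also treats only $n=0$, declaring the general case "similar") writes down the explicit map sending a partition $k=k_0+\dots+k_{l-1}$ to the monomial $\phi^D_{-2(k_0-0)+\frac{1}{2}}\cdots\phi^D_{-2(k_{l-1}-l+1)+\frac{1}{2}}\phi^D_{-2(l-1)-\frac{1}{2}}\cdots\phi^D_{-\frac{1}{2}}|0\rangle$ --- i.e., exactly the Frobenius/Durfee-coordinate bijection you name as your "more transparent" fallback. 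The $q$-series route ties in nicely with the character computation in the corollary that follows; the bijective route avoids any $q$-series identity and reads off all three bullets at once.

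There is, however, a concrete problem in your setup for general $n$, sitting exactly at "the hard part" you flagged. You take $deg_h(S)=\sum_{j\in S\triangle S_n}j$, i.e., particles and holes both counted \emph{positively} (which is, to be fair, the literal reading of the paper's displayed definition). Under that reading the third bullet is \emph{false} for $n\neq 0$: in $\mathit{F_{(1)}^{\ten \frac{1}{2}}}$ the vector $\phi^D_{-3-\frac{1}{2}}|0\rangle$ gets energy $\floor[\big]{\frac{3+1}{2}}+\floor[\big]{\frac{1+1}{2}}=2+1=3$, and no basis vector gets energy $1$ at all (the weight-one modifications of $S_1=\{1\}$ --- removing index $1$, or adding index $2$, possibly together with index $0$ --- all change the charge $dg$), so $\dim \mathit{F_{(1,1)}^{\ten \frac{1}{2}}}=0\neq p(1)$. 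The grading the lemma needs is the \emph{excitation energy}, in which holes enter with a minus sign:
\begin{equation*}
deg_h(v)=\sum_{\text{indices added}}\floor[\big]{\tfrac{n_l+1}{2}}\ -\sum_{\text{indices removed}}\floor[\big]{\tfrac{n_l+1}{2}}\ =\ \tfrac{1}{2}\bigl(L^{1/2}_0\text{-eigenvalue of }v\ -\ L^{1/2}_0\text{-eigenvalue of }v_n\bigr),
\end{equation*}
which is what the paper's own worked example forces ($deg_h=4$, not $3+2+0+1=6$; the displayed definition carries the same sign slip you inherited), and what the proof of Proposition \ref{prop:heisdecomp} uses when it asserts $h_{-m}$ raises $deg_h$ by exactly $m$. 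For $n=0$ nothing is ever removed, so the two readings coincide --- which is precisely why your Durfee-square argument (and the paper's bijection) is valid there and why the discrepancy stays invisible. With the corrected grading, the particle--hole bookkeeping against the Fermi level does reassemble into $\prod_{j\geq 1}(1-q^j)^{-1}$ independently of $n$, and either of your two proposed strategies closes the general case; note only that bullets one and two then require the (standard, but no longer automatic) fact that $v_n$ is the unique minimizer of $L^{1/2}_0$ in $\mathit{F_{(n)}^{\ten \frac{1}{2}}}$, since non-negativity of $deg_h$ is no longer free.
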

\begin{proof}
The first two points are obvious. For the third part, we will only prove that $dim \mathit{F_{(0, k)}^{\ten \frac{1}{2}}}=p(k)$, i.e., we will consider the $deg_h=k$ component in $\mathit{F_{(0)}^{\ten \frac{1}{2}}}$, the proof is similar for any $\mathit{F_{(n)}^{\ten \frac{1}{2}}}$, $n\in \mathbb{Z}$.
Consider a partition $\lambda_k=(k_0, k_1, \dots, k_{l-1})$ of $k$ into $l$ parts, i.e.
\[
k=k_0+\dots k_{l-1}, \quad k_0\geq k_1\dots \geq k_{l-1}\geq 1.
\]
The vacuum in $\mathit{F_{(0)}^{\ten \frac{1}{2}}}$ is $|0\rangle$. Assign a unique element $v_{\lambda_{k}}$ to this partition $\lambda_k$ by
\[
v_{\lambda_{k}}=\phi^D_{-2(k_0-0)+\frac{1}{2}}\phi^D_{-2(k_1-1)+\frac{1}{2}}\dots \phi^D_{-2(k_{l-1}-l+1)+\frac{1}{2}} \phi^D_{-2(l-1)-\frac{1}{2}}\phi^D_{-2(l-2)-\frac{1}{2}}\dots \phi^D_{-\frac{1}{2}}  |0\rangle
\]
It is clear that  $v_{\lambda_{k}}\in \mathit{F_{(0, k)}^{\ten \frac{1}{2}}}$.
\end{proof}
Next we  establish the  decomposition of $\mathit{F^{\ten \frac{1}{2}}}$ as a direct sum of irreducible highest modules for $\mathcal{H}_{\mathbb{Z}}$:
\begin{prop} \label{prop:heisdecomp}\cite{AngTVA}
The neutral fermion Fock space  $\mathit{F^{\ten \frac{1}{2}}}$ decomposes into irreducible highest weight modules for  $\mathcal{H}_{\mathbb{Z}}$ under the action $r_{D, H}$ as follows:
\begin{equation}
\mathit{F^{\ten \frac{1}{2}}} = \oplus _{m\in \mathbb{Z}} \mathit{F_{(m)}^{\ten \frac{1}{2}}}\cong \oplus _{m\in \mathbb{Z}} B_m,  \quad \text{where} \quad \mathit{F_{(m)}^{\ten \frac{1}{2}}}\cong B_m,\quad
B_m \cong  \mathbb{C}[x_1, x_2, \dots , x_n, \dots ],\quad \forall \ m\in \mathbb{Z}.
\end{equation}
\end{prop}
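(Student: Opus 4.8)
The plan is to prove this in the standard way used for the charged fermion Fock space $\mathit{F^{\ten 1}}$ (compare Proposition 4.1 of \cite{KacRaina}), but with the grading $deg_h$ playing the role of the principal grading. There are three things to check: that each summand $\mathit{F_{(m)}^{\ten \frac{1}{2}}}$ is invariant under the action $r_{D,H}$ of $\mathcal{H}_{\mathbb{Z}}$; that the vacuum-like vector $v_m$ is a highest weight vector; and that the highest weight module generated by $v_m$ fills all of $\mathit{F_{(m)}^{\ten \frac{1}{2}}}$. The first two are grading statements about the modes $h_n$, and the third is a graded-dimension count in which Lemma \ref{lem:countdeg} is decisive.

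First I would show that each $h_n$ preserves the grading $dg$, so that $\mathit{F_{(m)}^{\ten \frac{1}{2}}}$ is an $\mathcal{H}_{\mathbb{Z}}$-submodule. From the explicit formula \eqref{eq:Heismodes}, every summand of $h_n$ is a normal ordered product $:\phi^D_p\phi^D_{2n-p}:$ whose two factors carry formal indices of opposite parity. A short case analysis over the four creation/annihilation possibilities then shows that the net change of $\#\{\text{odd indices}\}-\#\{\text{even indices}\}$ vanishes in each case; the only delicate point is that for a mixed creation-annihilation term one must use the parity of the index actually removed, not the formal mode index, after which the two contributions cancel. Next I would show that $h_n$ shifts $deg_h$ by $-n$, i.e. $h_n\big(\mathit{F_{(m,k)}^{\ten \frac{1}{2}}}\big)\subseteq \mathit{F_{(m,k-n)}^{\ten \frac{1}{2}}}$. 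This can be read off from \eqref{eq:Heismodes} directly, or understood conceptually by viewing $deg_h$ as the energy grading in the rescaled variable $z^2$, for which $h^D(z)=\sum_n h_n z^{-2n-1}$ is a weight-one current carrying energy $-n$ in its $n$-th mode.

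Granting the grading shift, the highest weight property of $v_m$ becomes automatic: since $deg_h\ge 0$ everywhere and $\mathit{F_{(m,0)}^{\ten \frac{1}{2}}}=\mathbb{C}v_m$ by Lemma \ref{lem:countdeg}, for $n>0$ the vector $h_n v_m$ lies in $\mathit{F_{(m,-n)}^{\ten \frac{1}{2}}}=0$, so $h_n v_m=0$; moreover $h_0$ maps $\mathbb{C}v_m$ to itself, hence $h_0 v_m=\lambda v_m$ for some scalar. Therefore $\mathbb{C}[h_{-1},h_{-2},\dots]v_m$ is a highest weight $\mathcal{H}_{\mathbb{Z}}$-module contained in $\mathit{F_{(m)}^{\ten \frac{1}{2}}}$. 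Because the level is nonzero ($[h_n,h_{-n}]=n\neq 0$), this module is irreducible and isomorphic to $B_m\cong\mathbb{C}[x_1,x_2,\dots]$ via $h_{-n}\mapsto nx_n\cdot,\ h_n\mapsto\partial_{x_n}$ (the standard fact cited from \cite{KacRaina}, \cite{FLM}), and under the $deg_h$-grading---which matches the principal grading $\deg h_{-n}=n$ by the previous step---its degree-$k$ piece has dimension $p(k)$. Since Lemma \ref{lem:countdeg} gives $\dimn \mathit{F_{(m,k)}^{\ten \frac{1}{2}}}=p(k)$ as well, the submodule and the ambient space agree in every $deg_h$-degree, so they coincide: $\mathit{F_{(m)}^{\ten \frac{1}{2}}}=\mathbb{C}[h_{-1},\dots]v_m\cong B_m$. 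Summing over $m\in\mathbb{Z}$ gives the asserted decomposition.

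I expect the main obstacle to be the grading bookkeeping for the multi-local modes $h_n$: verifying simultaneously that $dg$ is exactly preserved and that $deg_h$ is shifted by exactly $-n$, in the presence of the power-of-$z$ shifts intrinsic to the $2$-point-local normal ordering (already visible in the extra term $-\tfrac{1}{2w}h^D(w)$ of \eqref{eqn:h-normorder}). Once both gradings are shown to behave as claimed, the remainder is the routine Heisenberg Fock-space argument, with Lemma \ref{lem:countdeg} supplying the crucial equality of graded dimensions.
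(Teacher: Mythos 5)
Your proof is correct, and it follows the same overall skeleton as the paper's: (i) $dg$-invariance of the action from \eqref{eq:Heismodes}, (ii) the highest-weight property of $v_m$, and (iii) spanning via the graded-dimension count of Lemma \ref{lem:countdeg} combined with the standard fact that a highest weight Heisenberg module at nonzero level is irreducible and isomorphic to $\mathbb{C}[x_1,x_2,\dots]$ (Proposition 2.1 of \cite{KacRaina}). The genuine difference is in step (ii): the paper does not prove the highest-weight property internally but imports it from \cite{ACJ2}, where $r_{D,H}$ is exhibited as the restriction of an $a_{\infty}$-representation shown there to be highest weight on each component (each $h_n$, $n>0$, lying in $a_{\infty}^+$). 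You instead derive it from the grading shift $h_n\big(\mathit{F_{(m,k)}^{\ten \frac{1}{2}}}\big)\subseteq \mathit{F_{(m,k-n)}^{\ten \frac{1}{2}}}$ together with $deg_h\geq 0$ and $\mathit{F_{(m,0)}^{\ten \frac{1}{2}}}=\mathbb{C}v_m$, which makes the argument self-contained; the price is that you must verify the shift for the annihilation modes $h_n$, $n>0$, as well, whereas the paper only checks it for $h_{-m}$, $m>0$, which is all its spanning step needs. Your verification does go through, but, as you yourself anticipate, only under the ``energy'' reading of $deg_h$: each summand of $h_n$ in \eqref{eq:Heismodes} changes the $L^{1/2}_0$-eigenvalue by exactly $-2n$, and $deg_h$ is half the $L^{1/2}_0$-excess over $v_m$; the combinatorial floor-function formula must be interpreted this way (the paper's own second example following the definition of $deg_h$ shows the literal sum-of-floors reading is off on components with $m\neq 0$), and your remark about using the parity of the index actually removed in mixed creation-annihilation terms is exactly the right care for the $dg$-invariance as well. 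One small omission: you leave the eigenvalue $h_0v_m=\lambda v_m$ undetermined, whereas identifying $\lambda=m$ (a one-line computation from \eqref{eq:Heismodes}, since $h_0$ measures precisely the $dg$-charge) is what pins the component down as $B_m$ specifically; with that line added, your argument is complete and somewhat more self-contained than the paper's, at the cost of the extra grading bookkeeping.
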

In \cite{AngTVA} we gave a brief proof of this statement using the bicharacter construction, here we will give an alternative proof using methods more standard to representation  theory.
\begin{proof}
The proof will consist of three steps: First, we show that each  component $\mathit{F_{(n)}^{\ten \frac{1}{2}}}$ ($n\in \mathbb{Z}$ is a submodule for the  Heisenberg algebra $\mathcal{H}_{\mathbb{Z}}$. Second, we show that each component $\mathit{F_{(n)}^{\ten \frac{1}{2}}}$ ($n\in \mathbb{Z}$ is a highest weight module for the Heisenberg algebra $\mathcal{H}_{\mathbb{Z}}$, with highest vector  $v_n$ and highest weight $n$. And third, we show that the monomials
$h_{-k_{l}}\dots h_{-k_{1}}v_n$ for $0\leq k_1\leq \dots \leq k_l$ span the component $\mathit{F_{(n)}^{\ten \frac{1}{2}}}$. The second and third step together imply (see e.g. Proposition 2.1 of \cite{KacRaina}, or \cite{FLM})  that $\mathit{F_{(n)}^{\ten \frac{1}{2}}}\cong B_n \cong  \mathbb{C}[x_1, x_2, \dots , x_n, \dots ]$

For the first step, it is clear from \eqref{eq:Heismodes} that the action  $r_{D, H}$ preserves the grading $dg$ on $\mathit{F^{\ten \frac{1}{2}}}$, as each of the summands in any of the $h_n$ preserves the $dg$ grading. Thus, each homogeneous $dg$ component is a submodule for the  Heisenberg algebra $\mathcal{H}_{\mathbb{Z}}$.

For the second step, note that in \cite{ACJ2} we constructed a representation of the infinite rank Lie algebra $a_{\infty}$ on $\mathit{F^{\ten \frac{1}{2}}}$, a representation that actually restricts to  the representation $r_{D, H}$ of the Heisenberg algebra $\mathcal{H}_{\mathbb{Z}}$ (note that of course $\mathcal{H}_{\mathbb{Z}}$ is a subalgebra of $a_{\infty}$). Since in \cite{ACJ2} we proved that this  representation of $a_{\infty}$ is a highest weight representation on each component $\mathit{F_{(n)}^{\ten \frac{1}{2}}}$ (see Proposition 3.5), we will just use it here by noting that of course each $h_n$ for $n>0$ is an element of $a_{\infty}^+$.

For the third step, notice that from \eqref{eq:Heismodes} we have
\[
h_{-m}=\frac{1}{2}\sum _{i\in \mathbb{Z}}(-1)^{i+1}:\phi^D_{-i-\frac{1}{2}} \phi^D_{-2m+1+ i-\frac{1}{2}}:
\]
and thus if an element $v$ is of degree $deg_h(v)=q$, $v\in \mathit{F_{(n, q)}^{\ten \frac{1}{2}}}$, then $deg_h(h_{-m}v)=q+\floor[\big ]{\frac{2m}{2}}=q+m$, i.e., $h_{-m}(\mathit{F_{(n, q)}^{\ten \frac{1}{2}}})\subseteq \mathit{F_{(n, q+m)}^{\ten \frac{1}{2}}}$. Hence consider the monomial elements
\[
h_{-k_{l}}\dots h_{-k_{1}}v_n, \quad 0<k_1\leq \dots \leq k_l;
\]
over the different partitions $0<k_1\leq \dots \leq k_l$ of $k=k_1+\dots +k_l$. These monomial elements are linearly independent, and there is $p(k)$ of them. Moreover, for each partition $0<k_1\leq \dots \leq k_l$ of $k=k_1+\dots +k_l$ we have
\[
h_{-k_{l}}\dots h_{-k_{1}}v_n \in \mathit{F_{(n, k)}^{\ten \frac{1}{2}}}.
\]
Since from Lemma \ref{lem:countdeg} the dimension of $\mathit{F_{(n, k)}^{\ten \frac{1}{2}}}$ is precisely $p(k)$, then these monomial elements span $\mathit{F_{(n, k)}^{\ten \frac{1}{2}}}$. This concludes the proof of the third step, and the proof of the proposition.
\end{proof}
\begin{cor}\label{cor:Heisequivalence}
As Heisenberg  $\mathcal{H}_{\mathbb{Z}}$ modules
\[
 \mathit{F^{\ten \frac{1}{2}}}\cong  \mathit{F^{\ten 1}}
 \]
 \end{cor}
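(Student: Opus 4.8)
The plan is to deduce the isomorphism directly from the two Heisenberg decompositions already in hand, reducing the claim to the observation that both Fock spaces decompose into the \emph{same} collection of irreducible highest weight $\mathcal{H}_{\mathbb{Z}}$-modules. First I would recall from Proposition \ref{prop:heisdecomp} that under the action $r_{D,H}$ one has
\[
\mathit{F^{\ten \frac{1}{2}}} = \oplus_{m\in\mathbb{Z}} \mathit{F_{(m)}^{\ten \frac{1}{2}}} \cong \oplus_{m\in\mathbb{Z}} B_m,
\]
where each $B_m \cong \mathbb{C}[x_1, x_2, \dots , x_n, \dots]$ is the irreducible highest weight $\mathcal{H}_{\mathbb{Z}}$-module generated by the vacuum-like vector $v_m$, on which the central mode $h_0$ acts by the scalar $m$. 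On the other side, the classical charge decomposition \eqref{Aequiv} of the charged free fermion Fock space gives
\[
\mathit{F^{\ten 1}} = \oplus_{m\in\mathbb{Z}} \mathit{F_{(m)}^{\ten 1}} \cong \oplus_{m\in\mathbb{Z}} B_m,
\]
with $\mathit{F_{(m)}^{\ten 1}}$ the charge-$m$ subspace, again an irreducible highest weight module on which $h_0$ (the charge operator) acts by $m$.

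The key step is then to match the two decompositions term by term. Since $h_0$ is central in $\mathcal{H}_{\mathbb{Z}}$ (indeed $[h_0, h_n] = 0$ for all $n$), it acts as a fixed scalar on any irreducible module, and this scalar---the highest weight---is a complete isomorphism invariant: any two irreducible highest weight $\mathcal{H}_{\mathbb{Z}}$-modules with the same $h_0$-eigenvalue are isomorphic to $\mathbb{C}[x_1, x_2, \dots]$, and hence to each other (see e.g. \cite{KacRaina}, \cite{FLM}). Both decompositions above run over the same index set $\mathbb{Z}$, and in each the summand indexed by $m$ carries highest weight $m$. Thus the correspondence $\mathit{F_{(m)}^{\ten \frac{1}{2}}} \to \mathit{F_{(m)}^{\ten 1}}$, assembled over all $m\in\mathbb{Z}$, is an $\mathcal{H}_{\mathbb{Z}}$-module isomorphism $\mathit{F^{\ten \frac{1}{2}}} \cong \mathit{F^{\ten 1}}$.

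There is no serious obstacle here; the entire content of the corollary has already been front-loaded into Proposition \ref{prop:heisdecomp}. The only point requiring a moment's care is verifying that the highest weights genuinely coincide, i.e.\ that the $h_0$-eigenvalue on the type D vacuum-like vector $v_m$ equals the charge $m$ of the corresponding type A summand. This is exactly the normalization recorded in the second step of the proof of Proposition \ref{prop:heisdecomp}, where $v_m$ is identified as a highest weight vector of weight $m$, together with the standard identification of the central mode $h_0$ with the charge operator on $\mathit{F^{\ten 1}}$. With the weights aligned, the isomorphism is forced by the uniqueness of the irreducible highest weight Heisenberg modules.
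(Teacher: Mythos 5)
Your proof is correct and takes essentially the same route the paper intends: the corollary is immediate from comparing Proposition \ref{prop:heisdecomp} with the classical charge decomposition \eqref{Aequiv}, since both exhibit the space as $\oplus_{m\in\mathbb{Z}}B_m$ with the summand of index $m$ the irreducible highest weight $\mathcal{H}_{\mathbb{Z}}$-module of weight $m$. The paper offers no separate proof because it regards exactly this term-by-term matching (justified by the uniqueness of irreducible highest weight Heisenberg modules) as immediate, which is what you have written out.
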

\begin{remark}
In some instances it has been (erroneously) remarked that the  famous charged free boson-fermion correspondence (what we call correspondence of type A) is equivalent/nothing-more then the isomorphism  as  Heisenberg $\mathcal{H}_{\mathbb{Z}}$ modules given by \eqref{Aequiv}.
 But as we see from Corollary \ref{cor:Heisequivalence}, from the standpoint of Heisenberg $\mathcal{H}_{\mathbb{Z}}$ modules  $\mathit{F^{\ten 1}}$ and $\mathit{F^{\ten \frac{1}{2}}}$ are isomorphic, although obviously there should be some difference in terms of the physics structures on those spaces. The difference is that the boson-fermion correspondence of type D-A is  a  twisted vertex algebra structure isomorphism, as opposed to the  boson-fermion correspondence of type A which is a super vertex algebra structure isomorphism.
 (See \cite{AngTVA} for more on the structure of the twisted vertex algebra with space of states $\mathit{F^{\ten \frac{1}{2}}}$ and space of fields  $ \mathbf{\mathfrak{FD}}\{ \phi ^D(z); 2 \}$).
 So what we have shown once more here (see also \cite{ACJ2}) is that the boson-fermion correspondences are more than just maps (or isomorphisms) between certain Lie algebra modules: a boson-fermion correspondence is first and foremost an isomorphism between two different   chiral field theories, one fermionic (expressible in terms of free fermions and their descendants), the other bosonic (expressible in terms  of exponentiated bosons).
\end{remark}

\section{Virasoro fields on the Fock space $\mathit{F^{\ten \frac{1}{2}}}$}

We now turn to constructing Virasoro fields on the Fock space $\mathit{F^{\ten \frac{1}{2}}}$.  Recall the  well-known  Virasoro algebra $Vir$, the central extension of the complex polynomial vector fields on the circle. The Virasoro  algebra $Vir$ is the Lie algebra with generators $L_n$, $n\in \mathbb{Z}$, and central element $C$, with
commutation relations
\begin{equation}
\label{eqn:VirCRs}
[L_m, L_n] =(m-n)L_{m+n} +\delta_{m, -n}\frac{(m^3-m)}{12}C; \quad [C, L_m]=0, \ m, n\in \mathbb{Z}.
\end{equation}
Equivalently, the Virasoro field
$L(z): =\sum _{n\in \mathbb{Z}} L_{n} z^{-n-2}$
has OPE with itself given by:
\begin{equation}
\label{eqn:VirOPEs}
L(z)L(w)\sim \frac{C/2}{(z-w)^4} + \frac{2L(w)}{(z-w)^2}+ \frac{\partial_{w}L(w)}{(z-w)}.
\end{equation}
\begin{defn}\label{defn:VirStr}
We say that a twisted vertex algebra with a space of fields $V$ has a Virasoro field if there is field in $V$ such that its modes are the generators of  the Virasoro algebra $Vir$.
\end{defn}
We want to mention that the Virasoro field is of conformal weight 2 (for a precise definition of conformal weight see e.g. \cite{FLM}, \cite{Kac}, \cite{LiLep}).

We want to recall first the known facts about the Virasoro fields on  $\mathit{F^{\ten 1}}$ and correspondingly $\mathit{F^{\ten \frac{1}{2}}}$.
The Fock space  $\mathit{F^{\ten 1}}$, the fermionic side of the boson-fermion correspondence of type A, see above, which is generated from the two odd 1-point local fields $\psi ^+ (z)$ and $\psi ^- (z)$, has a one-parameter family of Virasoro fields with central charge $-12\lambda^2 +12\lambda -2$ (see e.g. \cite{Kac}, Chapter 5):
\begin{align}
\label{eqn:Vir-typeA}
L^{A, \lambda} (z)&=\frac{1}{2}:h^A (z)^2: +(\frac{1}{2} -\lambda )\partial_z h^A (z)\\
&=(1-\lambda):(\partial_z\psi ^+ (z))\psi ^- (z):+\lambda :(\partial _z \psi ^- (z))\psi ^+ (z):,
\end{align}
where $ h^A (z)=:\psi ^+ (z)\psi ^- (z):$ is the Heisenberg field for the correspondence of type A, $\lambda \in \mathbb{C}$.

$ h^A (z)$ is of conformal weight 1 (roughly speaking the normal order products and the derivatives behave as expected with respect to the conformal weight).  We would like to underline that the two components of the Virasoro field come from the  two standard   conformal-weight-2-fields
$:(\partial_z\psi ^+ (z))\psi ^- (z):$ and $:(\partial _z \psi ^- (z))\psi ^+ (z):$.   Their difference equals  $\partial_z h^A (z)$, and their sum  $:h^A (z)^2:$ (\eqref{HeisAdepen2}). Hence for the  correspondence  of type A the Virasoro field $L^{A, \lambda} (z)$ is  a linear combination of the normal ordered products
$:(\partial_z\psi ^+ (z))\psi ^- (z):$ and $:(\partial _z \psi ^- (z))\psi ^+ (z):$.  In fact, there is  a more general family of Virasoro fields $L^{A, \lambda, b} (z)$ for any $\lambda, b\in \mathbb{C}$, with the same central charge $-12\lambda^2 +12\lambda -2$ (\cite{Iohara}):
\begin{equation}
\label{eqn:Vir-typeA2}
L^{A, \lambda, b} (z)=\frac{1}{2}:h^A (z)^2: +(\frac{1}{2} -\lambda )\partial_z h^A (z) -\frac{b}{z}h^A(z) +\frac{b(b-2\lambda +1)}{2z^2}.
\end{equation}
Note that due to the shift $\frac{b}{z}$ the two-parameter field $L^{A, \lambda, b} (z)$ can  not be a vertex operator in a super vertex algebra, but the shifts have to be allowed in a twisted vertex algebra, as they are inevitable.

Now for the Fock space  $\mathit{F^{\ten \frac{1}{2}}}$: it is well known $\mathit{F^{\ten \frac{1}{2}}}$ carry a super vertex operator algebra structure (conformal super vertex algebra in the language of \cite{Kac} for instance), and as such is also a module for the Virasoro algebra with central charge $c=\frac{1}{2}$ (see for example \cite{Triality}, \cite{Wang}, \cite{WangDual}). The well known Virasoro field $L^{1/2}(z)$ on $\mathit{F^{\ten \frac{1}{2}}}$ is given by
\begin{equation}
\label{eqn:Vir1/2}
L^{1/2}(z)=\frac{1}{2}: \partial_z \phi ^D (z) \phi ^D(z):.
\end{equation}
Hence, as super vertex algebra, $\mathit{F^{\ten \frac{1}{2}}}$ has a conformal vector $w_c=\frac{1}{2}\phi^D_{-3/2}\phi^D_{-1/2}|0\rangle$  (see e.g.  \cite{Triality}, \cite{Wang}, \cite{WangDual}).

Before proceeding to the new, multi-local,  Virasoro representations on $\mathit{F^{\ten \frac{1}{2}}}$, we want to introduce the following Corollary of Proposition \ref{prop:heisdecomp} utilizing the combination of the new Heisenberg representation on $\mathit{F^{\ten \frac{1}{2}}}$ with the well known charge $c=\frac{1}{2}$ representation of Virasoro.
\begin{cor}(to Proposition \ref{prop:heisdecomp})
Define the graded dimension (character) of the Fock space\\ $\mathit{F}=\mathit{F^{\ten \frac{1}{2}}}$ as
\[
ch \mathit{F} :=tr_{\mathit{F}}q^{L^{1/2}_0}z^{h_0}
\]
We have
\begin{align}
\label{eqn:jac1}
ch \mathit{F}&=\prod_{i=1}^{\infty} (1+zq^{2i-1+\frac{1}{2}})(1+z^{-1}q^{2i-2+\frac{1}{2}})\\
\label{eqn:jac2}
&=\frac{1}{\prod_{i=1}^{\infty} (1-q^{2i})}\sum_{n\in \mathbb{Z}} z^nq^{\frac{n}{2}}q^{n^2}
\end{align}
By comparing the two identities we get the Jacobi identity
\begin{equation}
\label{eqn:Jacid}
\prod_{i=1}^{\infty} (1-q^{2i})(1+zq^{2i-\frac{1}{2}})(1+z^{-1}q^{2i-\frac{3}{2}})=\sum_{m\in \mathbb{Z}}z^m q^{\frac{m(2m+1)}{2}}.
\end{equation}
\end{cor}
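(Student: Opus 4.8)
The plan is to compute the graded dimension $ch\,\mathit{F}$ in two independent ways, once fermionically and once via the Heisenberg decomposition, and then equate. For the product formula \eqref{eqn:jac1} I would work directly with the fermionic basis $\phi^D_{-n_k-\frac{1}{2}}\cdots\phi^D_{-n_1-\frac{1}{2}}|0\rangle$, $n_k>\cdots>n_1\geq 0$, which amounts to choosing a finite subset of $\mathbb{Z}_{\geq 0}$. Since $\phi^D(z)$ has conformal weight $\frac{1}{2}$ for $L^{1/2}(z)$ of \eqref{eqn:Vir1/2}, one has $[L^{1/2}_0,\phi^D_{-n-\frac{1}{2}}]=(n+\tfrac{1}{2})\phi^D_{-n-\frac{1}{2}}$, so the mode $\phi^D_{-n-\frac{1}{2}}$ contributes $q^{n+1/2}$; and since $h_0$ acts by the scalar $m$ on $\mathit{F_{(m)}^{\ten \frac{1}{2}}}$ (equivalently, each monomial is an $h_0$-eigenvector with eigenvalue its $dg$-value), the same mode contributes $z$ when $n$ is odd and $z^{-1}$ when $n$ is even. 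As each index occurs at most once, $ch\,\mathit{F}=\prod_{n\geq 0}(1+z^{\pm1}q^{n+1/2})$; separating the odd indices $n=2i-1$ from the even indices $n=2i-2$ collects exactly into \eqref{eqn:jac1}.

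For the sum formula \eqref{eqn:jac2} I would instead use the decomposition of Proposition \ref{prop:heisdecomp}, $\mathit{F^{\ten \frac{1}{2}}}=\oplus_{m\in\mathbb{Z}}\mathit{F_{(m)}^{\ten \frac{1}{2}}}$ with $\mathit{F_{(m)}^{\ten \frac{1}{2}}}\cong B_m$. On each summand $z^{h_0}$ equals $z^m$, so the task reduces to computing $tr_{B_m}q^{L^{1/2}_0}$. Recalling that $B_m$ is spanned by the monomials $h_{-k_l}\cdots h_{-k_1}v_m$ over partitions $0<k_1\leq\cdots\leq k_l$, I need two weights: first, summing $n_i+\frac{1}{2}$ over the explicit modes appearing in $v_m$ gives $L^{1/2}_0(v_m)=m^2+\frac{m}{2}$, uniformly for all $m\in\mathbb{Z}$ (checked on the cases $m>0$, $m<0$, $m=0$ separately); second, from the mode expression \eqref{eq:Heismodes} a short computation gives $[L^{1/2}_0,h_n]=-2n\,h_n$, so each $h_{-k}$ raises the $L^{1/2}_0$-weight by $2k$. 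Hence $tr_{B_m}q^{L^{1/2}_0}=q^{m^2+m/2}\prod_{k\geq1}(1-q^{2k})^{-1}$, and summing against $z^m$ over $m\in\mathbb{Z}$ produces \eqref{eqn:jac2}.

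Equating \eqref{eqn:jac1} and \eqref{eqn:jac2} and multiplying through by $\prod_{i\geq1}(1-q^{2i})$ then yields \eqref{eqn:Jacid}, once one rewrites $2i-1+\frac{1}{2}=2i-\frac{1}{2}$ and $2i-2+\frac{1}{2}=2i-\frac{3}{2}$ on the product side and $n^2+\frac{n}{2}=\frac{n(2n+1)}{2}$ on the sum side.

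I expect the only genuinely delicate step to be the factor-of-two bookkeeping in the energy grading: the relation $[L^{1/2}_0,h_n]=-2n\,h_n$ (rather than $-n\,h_n$), which stems from the doubled mode indexing $h^D(z)=\sum_n h_n z^{-2n-1}$, together with the uniform quadratic weight $m^2+\frac{m}{2}$ of the vacuum-like vectors $v_m$ across both signs of $m$, are where a stray factor or sign could most easily slip in, and where the characteristic type D-A "doubling" enters. The remaining manipulations are standard generating-function algebra.
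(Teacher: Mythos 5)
Your proposal is correct and follows essentially the same route as the paper: the product side \eqref{eqn:jac1} from the fermionic monomial basis with each mode contributing $q^{n+\frac{1}{2}}$ and a factor $z^{\pm 1}$ according to the parity of $n$, and the sum side \eqref{eqn:jac2} from the Heisenberg decomposition of Proposition \ref{prop:heisdecomp} with $L^{1/2}_0 v_m=(m^2+\tfrac{m}{2})v_m$ and each $h_{-k}$ raising the $L^{1/2}_0$-weight by $2k$, with the partition count supplied by Lemma \ref{lem:countdeg}. The only cosmetic difference is that you obtain the weight bookkeeping from the commutators $[L^{1/2}_0,\phi^D_{-n-\frac{1}{2}}]=(n+\tfrac{1}{2})\phi^D_{-n-\frac{1}{2}}$ and $[L^{1/2}_0,h_n]=-2n\,h_n$, whereas the paper evaluates $L^{1/2}_0$ directly on the basis vectors $v_m$ and $h_{-k_l}\cdots h_{-k_1}v_m$; these are the same computation.
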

\begin{remark}
For the boson-fermion correspondence of type A one gets a Jacobi identity (at $\lambda=\frac{1}{2}$ in the notation of \cite{Kac}, Chapter 5.1):
\begin{equation*}
\prod_{i=1}^{\infty} (1-q^i)(1-zq^{i-\frac{1}{2}})(1-z^{-1}q^{i-\frac{1}{2}})=\sum_{m\in \mathbb{Z}}z^m q^{m^2/2}
\end{equation*}
equivalent to  the Jacobi triple product identity (after exchanging $z$ to $-zq^{-\frac{1}{2}}$, see \cite{Kac}, Chapter 5.1):
\begin{equation}
\label{eqn:JacTriple}
\prod_{i=1}^{\infty} (1-q^i)(1-zq^{i-1})(1-z^{-1}q^{i})=\sum_{m\in \mathbb{Z}}(-1)^mz^m q^{m(m-1)/2}
\end{equation}
Here for the boson-fermion correspondence of type D-A we got \eqref{eqn:Jacid}, which curiously is equivalent to the same identity actually. In fact both of them are forms of the product identity for the same theta function $\theta_3$. In this case we need to exchange $z$ to $zq^{\frac{1}{2}}$ to get  the canonical theta function product form for $\theta_3$. And we need to exchange $z$ to $-zq^{\frac{3}{2}}$ and $q$ to $q^{\frac{1}{2}}$ to get the Jacobi triple product identity in the form \eqref{eqn:JacTriple}. This curiosity is of course explained by the previous Corollary \ref{cor:Heisequivalence}.
\end{remark}
\begin{proof}
From  \eqref{eqn:Vir1/2} and \eqref{eqn:normord} we have
\begin{align*}
L^{1/2}_0 &=\frac{1}{2}\sum _{m\in \mathbb{Z}+1/2} (-m-1/2):\phi^D_m \phi^D_{-m}:= \frac{1}{2}\sum _{n\in \mathbb{Z}} n:\phi^D_{-n-\frac{1}{2}} \phi^D_{n+\frac{1}{2}}:\\
&=\sum _{m\in \mathbb{Z}+1/2, m\geq 0} m:\phi^D_{-m} \phi^D_{m}:=\sum _{n\in \mathbb{Z}, n\geq 0} (n+\frac{1}{2}):\phi^D_{-n-\frac{1}{2}} \phi^D_{n+\frac{1}{2}}:
\end{align*}
So we first use the decomposition of  $\mathit{F^{\ten \frac{1}{2}}}$ by Length $\tilde{L}$.
For each $v=\phi^D_{-n_k-\frac{1}{2}}\dots \phi^D_{-n_2-\frac{1}{2}}\phi^D_{-n_1-\frac{1}{2}}|0\rangle$ from $\mathit{F_{(n)}^{\ten \frac{1}{2}}}$  we have (direct observation, but can be found in e.g. \cite{Triality})
\[
L^{1/2}_0 v=\left((n_1+\frac{1}{2})+\dots (n_k+\frac{1}{2})\right)v
\]
Now observe that on each $v=\phi^D_{-n_k-\frac{1}{2}}\dots\phi^D_{-n_2-\frac{1}{2}}\phi^D_{-n_1-\frac{1}{2}}|0\rangle$, due to the Heisenberg  $dg$ decomposition (Proposition \ref{prop:heisdecomp})
\[
h_0 v=\left(\#\{i=1, 2, \dots, k|n_i=\text{odd}\} -\#\{i=1, 2, \dots, k|n_i=\text{even}\}\right)v
\]
That means that we are multiplying by $z$ for every odd $n_i$, and by $z^{-1}$ for every $n_i$ that is even.
Hence \eqref{eqn:jac1} follows directly.

On the other hand, lets use the  decomposition of $\mathit{F^{\ten \frac{1}{2}}}$ into the Heisenberg  $dg$ decomposition (Proposition \ref{prop:heisdecomp}) $\mathit{F_{(n)}^{\ten \frac{1}{2}}}$, followed by the decomposition of each $\mathit{F_{(n)}^{\ten \frac{1}{2}}}$ into $\mathit{F_{(n, k)}^{\ten \frac{1}{2}}}$ by $deg_h$. A basis for $\mathit{F_{(n, k)}^{\ten \frac{1}{2}}}$ is given by the elements $v_{h, n, k}=h_{-k_{l}}\dots h_{-k_{1}}v_n$ with indecies varying with partitions $0<k_1\leq \dots \leq k_l$ of $k=k_1+\dots +k_l$.
For such elements we have of course $h_0 v_{h, n, k} =n v_{h, n, k}$, since they are in $\mathit{F_{(n)}^{\ten \frac{1}{2}}}$.
First, we have
\begin{align*}
L^{1/2}_0 v_0& = L^{1/2}_0 |0\rangle =0\\
L^{1/2}_0 v_n &= L^{1/2}_0 \phi^D_{-2n+1-\frac{1}{2}}\dots \phi^D_{-3-\frac{1}{2}}\phi^D_{-1-\frac{1}{2}}|0\rangle \\
&\quad = \left((1+\frac{1}{2})+ (3+\frac{1}{2}) +\dots +(2n-1+\frac{1}{2})\right) v_n=(n^2+\frac{n}{2}) v_n, \quad \text{for}\ n>0;\\
L^{1/2}_0 v_{-n} &= L^{1/2}_0 \phi^D_{-2n+2-\frac{1}{2}}\dots \phi^D_{-2-\frac{1}{2}}\phi^D_{-\frac{1}{2}}|0\rangle \\
&\quad =\left((0+\frac{1}{2}) + (2+\frac{1}{2}) + (4+\frac{1}{2})+\dots +(2n-2+\frac{1}{2})\right) v_{-n} =(n^2-\frac{n}{2})v_{-n}, \quad \text{for}\ n>0.
\end{align*}
Hence
\begin{align*}
L^{1/2}_0 v_{h, 0, k}& =L^{1/2}_0 h_{-k_{l}}\dots h_{-k_{1}}v_n=\left(2k_1+\dots +2k_l\right)v_{h, 0, k};\\
L^{1/2}_0 v_{h, n, k}& =L^{1/2}_0 h_{-k_{l}}\dots h_{-k_{1}}v_n=\left(2k_1+\dots +2k_l +n^2+\frac{n}{2}\right)v_{h, n, k}, \quad \text{for}\ n>0;\\
L^{1/2}_0 v_{h, -n, k}& =L^{1/2}_0 h_{-k_{l}}\dots h_{-k_{1}}v_{-n}=\left(2k_1+\dots +2k_l +n^2 -\frac{n}{2}\right)v_{h, -n, k}, \quad \text{for}\ n>0.
\end{align*}
Now since there are $p(k)$ such elements for partitions $0<k_1\leq \dots \leq k_l$ of $k=k_1+\dots +k_l$, we have
\begin{align*}
ch \mathit{F} &=\sum_{k\in \mathbb{Z}, k\geq 0}p(k)q^{2k}+\sum_{n, k\in \mathbb{Z}_+}p(k)\left(z^nq^{2k+n^2+\frac{n}{2}} +z^{-n}q^{2k+n^2-\frac{n}{2}}\right)\\
& =\sum_{k\in \mathbb{Z}, k\geq 0}p(k)q^{2k}\cdot \left(1+\sum_{n\in \mathbb{Z}_+} \left(z^nq^{n^2+\frac{n}{2}} +z^{-n}q^{n^2-\frac{n}{2}}\right)\right)\\
& =\frac{1}{\prod_{i=1}^{\infty} (1-q^{2i})}\left(1+\sum_{n\in \mathbb{Z}_+} \left(z^nq^{n^2+\frac{n}{2}} +z^{-n}q^{n^2-\frac{n}{2}}\right)\right)\\
& =\frac{1}{\prod_{i=1}^{\infty} (1-q^{2i})}\left(1+\sum_{n\in \mathbb{Z}_+} \left(z^nq^{\frac{n}{2}}q^{n^2} +z^{-n}q^{-\frac{n}{2}}q^{n^2}\right)\right)\\
&=\frac{1}{\prod_{i=1}^{\infty} (1-q^{2i})}\sum_{n\in \mathbb{Z}} z^nq^{\frac{n}{2}}q^{n^2}.
\end{align*}
\end{proof}
Now we proceed to constructing new, multi-local,  Virasoro representations on $\mathit{F^{\ten \frac{1}{2}}}$.
As we commented above, we will consider the \textbf{twisted vertex algebra} structure with space of states  $\mathit{F^{\ten \frac{1}{2}}}$, but with space of fields the $N=2$ field descendent  space  $\mathbf{\mathfrak{FD}}\{ \phi ^D(z); 2 \} $  (a ramified double cover of  $\mathit{F^{\ten \frac{1}{2}}}$). This space of fields, and specifically the multi-locality at both $z=w$ and $z=-w$ allowed us to construct the Heisenberg field--- an element of $\mathbf{\mathfrak{FD}}\{ \phi ^D(z); 2 \} $ acting on $\mathit{F^{\ten \frac{1}{2}}}$. This Heisenberg field, Proposition \ref{prop:heisdecomp} and the oscillator construction ensure that the Fock space $\mathit{F^{\ten \frac{1}{2}}}$ "inherits" a 1-parameter  Virasoro representation similar to \eqref{eqn:Vir-typeA} from the Fock space
$\mathit{F^{\ten 1}}$, but with a little twist:
\begin{prop}\label{prop:VirasoroNew}
The field
\begin{equation}
\label{eqn:normal-order-L-D}
L^{1}(z^2): =\frac{1}{2z^2}:h^D (z)h^D (z):
\end{equation}
has only  even-indexed modes,    $L^{1}(z^2):=\sum _{n\in \mathbb{Z}} L^1_{n} (z^2)^{-n-2}$ and moreover its modes $L_n$ satisfy the Virasoro algebra commutation relations with central charge $c=1$:
\[
[L_m, L_n] =(m-n)L_{m+n} +\delta_{m, -n}\frac{(m^3-m)}{12}.
\]
Equivalently, the field $L^{1}(z^2)$ has OPE with itself given by:
\begin{equation}
\label{eqn:VirOPEsD}
L^{1}(z^2)L^{1}(w^2)\sim \frac{1/2}{(z^2-w^2)^4} + \frac{2L^{1}(w^2)}{(z^2-w^2)^2}+ \frac{\partial_{w^2}L^{1}(w^2)}{(z^2-w^2)}.
\end{equation}
Furthermore, the field
\begin{equation}
\label{eqn:normal-order-L-D-lambda}
L^{\lambda, b}(z^2): =L^{1}(z^2) + (\frac{1}{2} -\lambda )\frac{1}{2z^2}\partial_z h^D (z)-\frac{b}{z^3}h^D(z) +\frac{16b^2+2b(1-2\lambda) -3(1-2\lambda)^2}{32z^4}=\sum _{n\in \mathbb{Z}} L^{\lambda}_{n} (z^2)^{-n-2}
\end{equation}
is a Virasoro field for every $\lambda, b \in \mathbb{C}$ with central charge $-12\lambda^2+12\lambda -2$. If $\lambda =\frac{1}{2}, \ b=0$, $L^{\frac{1}{2}, 0}(z^2)=L^{1}(z^2)$.
\end{prop}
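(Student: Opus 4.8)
The plan is to exploit the fact that, after the substitution $u=z^2$, the Heisenberg field becomes an \emph{ordinary} (one-point-local) Heisenberg field, so that the entire statement is transported from the type-A construction \eqref{eqn:Vir-typeA}, \eqref{eqn:Vir-typeA2} applied in the variable $u$. Concretely, set $u=z^2$ and $\tilde{h}(u):=z^{-1}h^D(z)=\sum_{n\in\mathbb{Z}}h_n u^{-n-1}$. From \eqref{eqn:HeisOPEsD} in Proposition \ref{prop:Heis} one gets $\tilde{h}(u)\tilde{h}(v)=z^{-1}w^{-1}h^D(z)h^D(w)\sim \frac{1}{(u-v)^2}$, i.e. $\tilde{h}$ is a standard Heisenberg field in $u$ whose modes are the same $h_n$ with $[h_m,h_n]=m\delta_{m+n,0}$. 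This is the precise content of the introduction's remark that the $2$-point locality at $z=\pm w$ reduces to ordinary locality under $z^2\mapsto u$: a singularity $(z^2-w^2)^{-k}=(z-w)^{-k}(z+w)^{-k}$ is simultaneously a pole at $z=w$ and at $z=-w$.

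First I would rewrite each term of $L^{\lambda,b}(z^2)$ in terms of $\tilde{h}(u)$. Matching modes shows $L^{1}(z^2)=\frac{1}{2z^2}:h^D(z)h^D(z):=\tfrac12:\tilde{h}(u)^2:$, because the $\pm$-splitting of $h^D(z)$ coincides with the sign of the Heisenberg index, so the field normal ordering reproduces the Sugawara modes $L^1_n=\tfrac12\sum_{j}:h_{n-j}h_j:$. For the remaining terms I use $h^D(z)=z\,\tilde{h}(z^2)$, giving $\partial_z h^D(z)=\tilde{h}(u)+2u\,\partial_u\tilde{h}(u)$ and hence
\[
\frac{1}{2z^2}\partial_z h^D(z)=\partial_u\tilde{h}(u)+\frac{1}{2u}\tilde{h}(u),\qquad \frac{1}{z^3}h^D(z)=\frac{1}{u}\tilde{h}(u).
\]
Collecting everything puts $L^{\lambda,b}(z^2)$ into the Feigin--Fuks shape $\tfrac12:\tilde{h}(u)^2:+\,\mu\,\partial_u\tilde{h}(u)-\tfrac{\beta}{u}\tilde{h}(u)+\tfrac{\gamma}{u^2}$ with $\mu=\tfrac12-\lambda$. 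The extra summand $\frac{1}{2u}\tilde{h}(u)$ produced by the factor $z$ in $h^D(z)=z\tilde{h}(z^2)$ is exactly the announced ``slight twist'': it shifts the effective coefficient to $\beta=b-\tfrac14(1-2\lambda)$, which is why the shift terms here differ from the type-A ones.

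It then suffices to verify the Virasoro relations for $\tfrac12:\tilde{h}(u)^2:+\mu\,\partial_u\tilde{h}(u)-\tfrac{\beta}{u}\tilde{h}(u)+\tfrac{\gamma}{u^2}$, a purely Heisenberg computation identical to the type-A case \eqref{eqn:Vir-typeA2}. Writing the modes as $L_n=L^{(0)}_n-\alpha_n h_n+\gamma\delta_{n,0}$ with $L^{(0)}_n=\tfrac12\sum_j:h_{n-j}h_j:$ and $\alpha_n$ affine in $n$, I would compute $[L_m,L_n]$ from the known brackets $[L^{(0)}_m,L^{(0)}_n]$, $[L^{(0)}_m,h_n]=-n\,h_{m+n}$ and $[h_m,h_n]=m\delta_{m+n,0}$. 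For affine $\alpha_n$ the unwanted $h_{m+n}$-terms cancel identically, the $(m-n)L_{m+n}$-term reproduces itself, the $m^3$-part of the central term yields $C=1-12\mu^2=-12\lambda^2+12\lambda-2$, and the $m$-part forces $\gamma=\tfrac12(\nu^2-\mu^2)$ with $\nu=\mu+\beta$. Equivalently one identifies $L^{\lambda,b}(z^2)=L^{A,\lambda,b'}(u)$ with $b'=b-\tfrac14(1-2\lambda)$ and quotes \eqref{eqn:Vir-typeA2}. The main obstacle is precisely this bookkeeping of the shift terms: one must track the $\tfrac1u\tilde{h}$ and $\tfrac1{u^2}$ contributions through the twist and confirm that the stated $z^{-4}$-coefficient is the one dictated by $\gamma=\tfrac12(\nu^2-\mu^2)$.

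Finally, the two auxiliary claims are immediate. Since $h^D(z)=-h^D(-z)$ involves only odd powers of $z$, the products $:h^D(z)h^D(z):$, $\partial_z h^D(z)$ and the explicit shifts all carry only even powers of $z$, so $L^{\lambda,b}(z^2)$ is genuinely a Laurent series in $u=z^2$ with integrally indexed modes $L_n$ (the ``even-indexed modes'' assertion). The self-OPE \eqref{eqn:VirOPEsD} is the image under $u=z^2,\ v=w^2$ of the standard $c=1$ Sugawara OPE $\tfrac{1/2}{(u-v)^4}+\tfrac{2L^1(v)}{(u-v)^2}+\tfrac{\partial_v L^1(v)}{u-v}$ for $\tilde{h}$, and the specialization $\lambda=\tfrac12,\ b=0$ gives $\mu=\beta=\gamma=0$, recovering $L^1(z^2)$.
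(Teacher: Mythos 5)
Your reduction to the one-point-local setting is valid and is a genuinely different route from the paper's proof: the paper proves both halves of the proposition by a long multi-local Wick/OPE computation in $(z,w)$, using the Taylor expansion Lemma \ref{lem:normalprodexpansion} and Mittag-Leffler expansions at $z=\pm w$, and then collects all cross-terms by hand; you instead transport everything to the variable $u=z^2$ and invoke the standard Heisenberg (Feigin--Fuks/type-A) computation. The one point that makes your transport legitimate --- that the creation/annihilation splitting of $h^D(z)$ agrees with the standard splitting of $\tilde{h}(u)=\sum_n h_n u^{-n-1}$, so that $\frac{1}{z^2}:h^D(z)h^D(z):\,=\,:\tilde{h}(u)^2:$ with the standard normal ordering --- is correctly identified in your write-up, and your mode computation is right: for $\alpha_n$ affine in $n$ the unwanted $h_{m+n}$ terms cancel, the cubic part of the central term gives $C=1-12\mu^2=-12\lambda^2+12\lambda-2$, and the linear part forces $\gamma=\tfrac12\bigl((\mu+\beta)^2-\mu^2\bigr)$. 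Your approach also isolates cleanly where the ``twist'' comes from, namely $\frac{1}{2z^2}\partial_z h^D(z)=\partial_u\tilde{h}(u)+\frac{1}{2u}\tilde{h}(u)$, shifting the effective shift-parameter to $b'=b-\tfrac14(1-2\lambda)$. What your route buys is a short conceptual proof; what the paper's computation buys is a verification carried out entirely inside the multi-local formalism, whose intermediate OPEs are reused in the surrounding discussion.

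However, the step you leave as a ``confirmation'' of the stated $z^{-4}$-coefficient must not be left implicit, because it fails as stated: the constant dictated by your own constraint is \emph{not} the one in the proposition. With $\mu=\tfrac12-\lambda$ and $\beta=b-\tfrac14(1-2\lambda)$ your formula gives
\begin{equation*}
\gamma=\tfrac12\bigl((\mu+\beta)^2-\mu^2\bigr)=\frac{16b^2+8b(1-2\lambda)-3(1-2\lambda)^2}{32},
\end{equation*}
whereas the proposition has $\frac{16b^2+2b(1-2\lambda)-3(1-2\lambda)^2}{32}$; the two agree only when $b(1-2\lambda)=0$, which is why the special cases $L^{\frac12,0}=L^{1}$ and $L^{\frac12,-\frac14}=\tilde{L}^{1}$ are unaffected. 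This is a typo in the statement rather than an error in your method: the paper's own proof, in its final display, requires $2M=b^2+2Kb-3K^2$ with $K=\tfrac14-\tfrac{\lambda}{2}$, and $b^2+2Kb-3K^2=(b-K)^2+4K(b-K)=\beta^2+2\mu\beta$ is exactly your $2\gamma$. So carry out the last step explicitly and record that the correct constant term is $\frac{16b^2+8b(1-2\lambda)-3(1-2\lambda)^2}{32}$; with that correction your argument is complete.
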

\begin{remark}
One should notice that in contrast to \eqref{eqn:Vir-typeA2}, where if $b=0$ the corrective term $\frac{b(b-2\lambda +1)}{2z^2}$ vanishes  regardless of $\lambda$; here if $b=0$ the corrective term is only zero if also $\lambda =\frac{1}{2}$. I.e., here there is a corrective $\frac{1}{z^4}$  term except when $L^{\lambda, b}(z^2)$ coincides with $L^{1}(z^2)$ for $b=0, \lambda =\frac{1}{2}$.
This  reflects the fact that the space of fields here, $ \mathbf{\mathfrak{FD}}\{ \phi ^D(z); 2 \}$, is a twisted ramified cover of the space of states $\mathit{F^{\ten \frac{1}{2}}}$ (\cite{AngTVA}). For this reason a proof is necessary, even though this 2-parameter family utilizes the Hesienberg field, its normal product and derivative.
 \end{remark}
\begin{proof}
From Wick's theorem we have
\begin{align*}
L^{1}(z^2)&L^{1}(w^2)=
\frac{1}{4}\frac{1}{z^2w^2}:h ^D(z) h ^D(z): :h ^D(w)h ^D(w):  \\
&\sim \frac{1}{z^2w^2}\frac{zw}{(z^2-w^2)^2} :h ^D(z)h ^D(w):   +\frac{1}{2}\frac{1}{z^2w^2}\frac{zw}{(z^2-w^2)^2}\frac{zw}{(z^2-w^2)^2}\\
&\sim \frac{1}{zw}\frac{1}{(z^2-w^2)^2} :h ^D(z)h ^D(w):  +\frac{1}{2}\frac{1}{(z^2-w^2)^4}\\
&\sim \frac{1}{w^5}\left(\frac{1}{z}-\frac{1}{2}\frac{1}{(z-w)} -\frac{1}{2}\frac{1}{(z+w)}+\frac{w}{4}\frac{1}{(z-w)^2}- \frac{w}{4}\frac{1}{(z+w)^2}\right):h ^D(z)h ^D(w):  +\frac{1}{2}\frac{1}{(z^2-w^2)^4}\\
&\sim \frac{1}{w^5}\left(-\frac{1}{2}\frac{1}{(z-w)} -\frac{1}{2}\frac{1}{(z+w)}+\frac{w}{4}\frac{1}{(z-w)^2}- \frac{w}{4}\frac{1}{(z+w)^2}\right):h ^D(z)h ^D(w):   +\frac{1}{2}\frac{1}{(z^2-w^2)^4}
\end{align*}
Now we apply Taylor expansion formula Lemma \ref{lem:normalprodexpansion}:
\begin{align*}
L^{1}(z^2)L^{1}(w^2)&\sim -\frac{1}{2w^5}\frac{:h ^D(w)h ^D(w):}{(z-w)} -\frac{1}{2w^5}\frac{:h ^D(-w)h ^D(w):}{(z+w)}\\
&\quad +\frac{1}{4w^4}\frac{:h ^D(w)h ^D(w):}{(z-w)^2}- \frac{1}{4w^4}\frac{:h ^D(-w)h ^D(w):}{(z+w)^2}\\
&\quad +\frac{1}{4w^4}\frac{:\partial_w h ^D(w)h ^D(w):}{(z-w)}- \frac{1}{4w^4}\frac{:\partial_{-w}h ^D(-w)h ^D(w):}{(z+w)} +\frac{1}{2}\frac{1}{(z^2-w^2)^4}
\end{align*}
From Proposition \ref{prop:Heis} we have $h ^D(-w)=-h ^D(w)$, thus
\begin{align*}
L^{1}(z^2)&L^{1}(w^2)\sim -\frac{1}{2w^5}\frac{:h ^D(w)h ^D(w):}{(z-w)} +\frac{1}{2w^5}\frac{:h ^D(w)h ^D(w):}{(z+w)}\\
&\quad +\frac{1}{4w^4}\frac{:h ^D(w)h ^D(w):}{(z-w)^2}+ \frac{1}{4w^4}\frac{:h ^D(w)h ^D(w):}{(z+w)^2}\\
&\quad +\frac{1}{4w^4}\frac{:\partial_w h ^D(w)h ^D(w):}{(z-w)}- \frac{1}{4w^4}\frac{:\partial_w h ^D(w)h ^D(w):}{(z+w)} +\frac{1}{2}\frac{1}{(z^2-w^2)^4}\\
&\sim \frac{1}{2}\frac{1}{(z^2-w^2)^4} +\frac{(z^2+w^2)}{w^2}\frac{L^{1}(w^2)}{(z^2-w^2)^2} -\frac{1}{w^4}\frac{:h ^D(w)h ^D(w):}{(z^2-w^2)}  +\frac{1}{2w^3}\frac{:\partial_w h ^D(w)h ^D(w):}{(z^2-w^2)}\\
&\sim \frac{1}{2}\frac{1}{(z^2-w^2)^4} +\frac{(z^2+w^2)}{w^2}\frac{L^{1}(w^2)}{(z^2-w^2)^2}+\frac{1}{(z^2-w^2)}\frac{L^{1}(w^2)}{w^2}\\
&\quad +\frac{1}{(z^2-w^2)}\left(-\frac{:h ^D(w)h ^D(w):}{w^4}  +\frac{:\partial_w h ^D(w)h ^D(w):}{2w^3}\right)\\
&\sim \frac{1}{2}\frac{1}{(z^2-w^2)^4} +\frac{2L^{1}(w^2)}{(z^2-w^2)^2} +\frac{1}{(z^2-w^2)}\left(:h ^D(w)h ^D(w):\partial_{w^2}\frac{1}{2w^2}  +\frac{1}{2w^2}\partial_{w^2}:h ^D(w)h ^D(w):\right)\\
&\sim \frac{1}{2}\frac{1}{(z^2-w^2)^4} +\frac{2L^{1}(w^2)}{(z^2-w^2)^2} +\frac{\partial_{w^2}L^{1}(w^2)}{(z^2-w^2)} \\
\end{align*}
This proves that $L^{1}(z^2)=\frac{1}{2z^2}:h^D (z)h^D (z):$ is a Virasoro field with central charge $1$.

To calculate the OPEs of $L^{\lambda, b}(z^2): =\frac{1}{z^2}\left(\frac{1}{2}:h^D(z)^2: + K\partial_z h^D (z)-\frac{b}{z}h^D(z) +\frac{M}{z^2}\right)$, where we denote $K=\frac{1}{4}-\frac{\lambda}{2}$ and $M=\frac{16b^2+2b(1-2\lambda) -3(1-2\lambda)^2}{32}$, we need several OPEs, for which we use Wick's Theorem:
\begin{equation*}
\frac{1}{z^3}h^D (z)\frac{1}{w^3}h^D (w)\sim \frac{1}{z^3w^3}\frac{zw}{(z^2-w^2)^2}\sim  \frac{1}{z^2w^2}\frac{1}{(z^2-w^2)^2}
\sim \frac{1}{w^4}\frac{1}{(z^2-w^2)^2}-\frac{1}{w^6}\frac{1}{z^2-w^2}
\end{equation*}
\begin{align*}
\frac{1}{z^3}h^D (z)\frac{1}{w^2}\partial_wh^D (w)&\sim \frac{1}{2z^3w^2}\left(\frac{1}{(z-w)^3}+\frac{1}{(z+w)^3}\right)
\sim  \frac{1}{z^2w^2}\frac{z^2+3w^2}{(z^2-w^2)^3}\\
&\sim \frac{4}{w^2}\frac{1}{(z^2-w^2)^3}-\frac{3}{w^4}\frac{1}{(z^2-w^2)^2}+\frac{3}{w^6}\frac{1}{z^2-w^2};\\
\frac{1}{z^2}\partial_zh^D (z)\frac{1}{w^3}h^D (w)&\sim \frac{1}{2z^3w^2}\left(\frac{-1}{(z-w)^3}+\frac{1}{(z+w)^3}\right)
\sim  \frac{-1}{z^2w^2}\frac{3z^2+w^2}{(z^2-w^2)^3}\\
&\sim -\frac{4}{w^2}\frac{1}{(z^2-w^2)^3}+\frac{1}{w^4}\frac{1}{(z^2-w^2)^2}-\frac{1}{w^6}\frac{1}{z^2-w^2};
\end{align*}
Thus
\begin{equation*}
\frac{1}{z^2}\partial_zh^D (z)\frac{1}{w^3}h^D (w) +\frac{1}{z^3}h^D (z)\frac{1}{w^2}\partial_wh^D (w)\sim -\frac{2}{w^4}\frac{1}{(z^2-w^2)^2}+\frac{2}{w^6}\frac{1}{z^2-w^2}
\end{equation*}
Next,
\begin{align*}
\frac{1}{z^2}\partial_z h^D (z)\frac{1}{w^2}\partial_w h^D (w)&\sim \frac{1}{4z^2w^2}\partial_z \partial_w \left(\frac{1}{(z-w)^2} - \frac{1}{(z+w)^2}\right)\\
&\sim -\frac{1}{4z^2w^2}\left(\frac{6}{(z-w)^4} + \frac{6}{(z+w)^4}\right)= \frac{-3}{z^2 w^2(z^2-w^2)^2} +\frac{-24}{(z^2-w^2)^4}\\
&\sim \frac{-24}{(z^2-w^2)^4}- \frac{3}{w^4(z^2-w^2)^2} +\frac{3}{w^6(z^2-w^2)}.
\end{align*}
\begin{align*}
\frac{1}{z^3}h^D (z)L^{1}(w^2)&= \frac{1}{2z^3w^2}h^D(z) :h^D (w)h^D (w):\sim\frac{1}{z^3w^2}\frac{zw}{(z^2-w^2)^2}h^D (w)\\
&\sim \frac{1}{(z^2-w^2)^2}\frac{h^D(w)}{w^3}-\frac{1}{z^2-w^2}\frac{h^D(w)}{w^5}.
\end{align*}
\begin{align*}
L^{1}(z^2)\frac{1}{w^3}h^D (w)&= \frac{1}{2z^2w^3} :h^D (z)h^D (z):h^D(w)\sim\frac{1}{4z^2w^3}\left(\frac{1}{(z-w)^2} -\frac{1}{(z+w)^2}\right)h^D (z)\\
&\sim \left(\frac{1}{4w^5} -\frac{(z-w)}{2w^6}+\dots\right)\left(h^D(w) +(z-w)\partial_wh^D(w)+\dots\right)\frac{1}{(z-w)^2}\\
&\quad  -
\left(\frac{1}{4w^5} +\frac{(z+w)}{2w^6}+\dots\right)\left(h^D(-w) +(z+w)\partial_{-w}h^D(-w)+\dots\right)\frac{1}{(z+w)^2}\\
&\sim \frac{z^2 +w^2}{2(z^2-w^2)^2}\frac{h^D(w)}{w^5} -\frac{1}{z^2-w^2}\frac{h^D(w)}{w^5} +\frac{1}{2w^4}\frac{1}{z^2-w^2} \partial_wh^D(w)\\
&\sim \frac{1}{(z^2-w^2)^2}\frac{h^D(w)}{w^3} -\frac{1}{2(z^2-w^2)}\frac{h^D(w)}{w^5} +\frac{1}{2w^4}\frac{1}{z^2-w^2} \partial_wh^D(w);
\end{align*}
where we used that $h^D (-w)=-h^D (w)$,  and so we get
\begin{equation*}
L^{1}(z^2)\frac{1}{w^3}h^D (w)+\frac{1}{z^3}h^D (z)L^{1}(w^2)\sim \frac{2}{(z^2-w^2)^2}\frac{h^D(w)}{w^3} -\frac{3}{2(z^2-w^2)}\frac{h^D(w)}{w^5} +\frac{1}{2w^4}\frac{1}{z^2-w^2} \partial_wh^D(w).
\end{equation*}
Further OPEs:
\begin{align*}
\frac{1}{z^2}\partial_z h^D (z)L^{1}(w^2)&= \frac{1}{2z^2w^2}\partial_z h^D(z) :h^D (w)h^D (w):\sim\frac{1}{2z^2w^2}\left(-\frac{1}{(z-w)^3} +\frac{1}{(z+w)^3}\right)h^D (w)\\
L^{1}(z^2)\frac{1}{w^2}\partial_w h^D (w)&= \frac{1}{2z^2w^2}:h^D (z)h^D (z):\partial_w h^D(w) \sim\frac{1}{2z^2w^2}\left(\frac{1}{(z-w)^3} +\frac{1}{(z+w)^3}\right)h^D (z)\\
&\quad \sim \frac{1}{2z^2w^2}\left(\frac{h^D (w)}{(z-w)^3} + \frac{\partial_w h^D (w)}{(z-w)^2} + \frac{\partial^{(2)}_w h^D (w)}{z-w}\right)\\
&\quad \quad + \frac{1}{2z^2w^2}\left(\frac{h^D (-w)}{(z+w)^3} + \frac{\partial_{-w} h^D (-w)}{(z+w)^2} + \frac{\partial^{(2)}_{-w} h^D (-w)}{z+w}\right);
\end{align*}
\begin{align*}
\Big(L^{1}(z^2)\frac{1}{w^2}\partial_w h^D (w) &+ \frac{1}{z^2}\partial_z h^D (z)L^{1}(w^2)\Big) \sim
 \frac{1}{2z^2w^2}\left(\frac{\partial_w h^D (w)}{(z-w)^2} + \frac{\partial_{w} h^D (w)}{(z+w)^2}\right) \\
&\quad \quad \hspace{3cm} + \frac{1}{2z^2w^2}\left(\frac{\partial^{(2)}_w h^D (w)}{z-w}-\frac{\partial^{(2)}_{w} h^D (w)}{z+w}\right)\\
&\sim \left(-\frac{1}{z^2w^2}\frac{1}{z^2-w^2}+\frac{1}{w^2} \frac{2}{(z^2-w^2)^2}\right)\partial_w h^D (w)  + \frac{1}{2z^2w}\frac{1}{z^2-w^2}\partial^{2}_w h^D (w)\\
&\sim \left(-\frac{1}{w^2}\frac{1}{z^2-w^2}+\frac{2}{(z^2-w^2)^2}\right)\frac{1}{w^2}\partial_w h^D (w)  + \frac{1}{2w^3}\frac{1}{z^2-w^2}\partial^{2}_w h^D (w)\\
&\sim \frac{2}{(z^2-w^2)^2}\frac{1}{w^2}\partial_w h^D (w)  + \frac{1}{z^2-w^2}\partial_{w^2}\left(\frac{1}{w^2}\partial_w h^D (w)\right)
\end{align*}
Finally
\begin{align*}
\Big(L^{1}(z^2)&+ K\frac{1}{z^2}\partial_z h^D (z)-\frac{b}{z^3}h^D(z) +\frac{M}{z^4}\Big) \Big(L^{1}(w^2)+K\frac{1}{w^2}\partial_w h^D (w)-\frac{b}{w^3}h^D(z) +\frac{M}{w^4}\Big)\\
&\sim \frac{1}{2}\frac{1}{(z^2-w^2)^4} +\frac{2L^{1}(w^2)}{(z^2-w^2)^2} +\frac{\partial_{w^2}L^{1}(w^2)}{(z^2-w^2)}  +\frac{-24K^2}{(z^2-w^2)^4}- \frac{3K^2}{w^4(z^2-w^2)^2} +\frac{3K^2}{w^6(z^2-w^2)}\\
&\quad +\frac{b^2}{w^4}\frac{1}{(z^2-w^2)^2}-\frac{b^2}{w^6}\frac{1}{z^2-w^2} +\frac{2Kb}{w^4}\frac{1}{(z^2-w^2)^2}-\frac{2Kb}{w^6}\frac{1}{z^2-w^2}\\
&\quad +\frac{2}{(z^2-w^2)^2}K\frac{1}{w^2}\partial_w h^D (w)  + \frac{1}{z^2-w^2}\partial_{w^2}\left(K\frac{1}{w^2}\partial_w h^D (w)\right)\\
&\quad -\frac{2b}{(z^2-w^2)^2}\frac{h^D(w)}{w^3} +\frac{3b}{2(z^2-w^2)}\frac{h^D(w)}{w^5} -\frac{b}{2w^4}\frac{1}{z^2-w^2} \partial_wh^D(w)\\
&\sim \frac{1}{2}\frac{1-48K^2}{(z^2-w^2)^4} +\frac{1}{(z^2-w^2)^2}\left(2L^{1}(w^2)+2K\frac{1}{w^2}\partial_w h^D (w) -2\frac{b}{w^3}h^D(w)+\frac{b^2+2Kb-3K^2}{w^4}\right) \\
&\quad +\frac{1}{z^2-w^2}\left(\partial_{w^2}L^{1}(w^2) + \partial_{w^2}\big(K\frac{1}{w^2}\partial_w h^D (w)\big) -\partial_{w^2}\big(\frac{b}{w^3} h^D (w)\big) +\partial_{w^2}\frac{b^2+2Kb-3K^2}{2w^4}\right)
\end{align*}
\end{proof}
We now proceed with  the existence of  other Virasoro representations on $\mathit{F^{\ten \frac{1}{2}}}$.
The 2-point-locality allows us to construct in a general standard way two new Virasoro fields from an existing 1-point Virasoro field.
\begin{prop}\label{prop:doublingcharge}
\textbf{I.} Let
$L^c(z): =\sum _{n\in \mathbb{Z}} L^c_{n} z^{-n-2}$
is a Virasoro 1-point local field with central charge $c$ and  OPE:
\begin{equation}
\label{eqn:VirOPEsc}
L^c(z)L^c(w)\sim \frac{c/2}{(z-w)^4} + \frac{2L^c(w)}{(z-w)^2}+ \frac{\partial_{w}L^c(w)}{(z-w)},
\end{equation}
then $\tilde{L}^c(z):= L^c(-z) =\sum _{n\in \mathbb{Z}} L^c_{n} (-z)^{-n-2}$ is also a Virasoro 1-point field with central charge $c$.\\
\textbf{II.} Let  a vector space $V$ be a module for the Virasoro algebra $Vir$ with generators  $L_n$ and a central charge $c$ via the representation $r_c: Vir\to End(V), \quad  L_n\mapsto  r_c(L_n)$. Let $N$ be a positive integer, and let $\epsilon$ be a primitive $N$th root of unity. Then $V$ is also a  module for the Virasoro algebra $Vir$ with central charge $Nc$ via the representation
\begin{align*}
r_{Nc}: Vir&\to End(V)\\
L_n&\mapsto  \frac{1}{N}r_c(L_{Nn})+\frac{N^2-1}{24N}c\delta_{n, 0}.
\end{align*}
Equivalently, if $L^c(z)$ is a Virasoro 1-point local field,
then
\begin{equation}
L^{Nc}(z^N): =\frac{1}{N^2z^{2N}}\left(\sum_{i=0}^{N-1}L^{c, \text{shifted}}(\epsilon ^i z)\right)+\frac{N^2-1}{24Nz^{2N}}c=\sum _{n\in \mathbb{Z}} L^{Nc}_{n} (z^N)^{-n-2},
\end{equation}
where $L^{c, \text{shifted}} (z): =z^2L^c(z): =\sum _{n\in \mathbb{Z}} L^c_{n} z^{-n}$,
is a Virasoro N-point local field with OPE:
\begin{equation}
\label{eqn:VirOPEsNc}
L^{Nc}(z^N)L^{Nc}(w^N)\sim \frac{Nc/2}{(z^N-w^N)^4} + \frac{2L^{Nc}(w^N)}{(z^N-w^N)^2}+ \frac{\partial_{w^N}L^{Nc}(w^N)}{(z^N-w^N)}.
\end{equation}
\end{prop}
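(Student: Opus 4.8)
The plan is to treat the two parts separately: Part I follows from a direct change of variables in the OPE, while Part II reduces to checking the Virasoro relations on the rescaled and shifted modes and then matching them to the explicit field formula.

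For Part I, I would work directly with the given OPE \eqref{eqn:VirOPEsc} and substitute $z \mapsto -z$, $w \mapsto -w$. Since $(w-z)^4 = (z-w)^4$ and $(w-z)^2 = (z-w)^2$ are even, the quartic and quadratic poles are unchanged; the simple pole picks up a sign from $1/(w-z) = -1/(z-w)$, but this is exactly compensated by $\partial_{-w}L^c(-w) = -\partial_w \tilde{L}^c(w)$, where $\tilde{L}^c(w) = L^c(-w)$. Hence $\tilde{L}^c(z)\tilde{L}^c(w)$ reproduces the same Virasoro OPE with the same charge $c$. Equivalently, at the level of modes one has $\tilde{L}^c_n = (-1)^n L^c_n$, so $[\tilde{L}^c_m,\tilde{L}^c_n] = (-1)^{m+n}[L^c_m,L^c_n]$; the linear part gives $(m-n)\tilde{L}^c_{m+n}$, and because $(-1)^{m+n}=1$ whenever $m=-n$, the central term is left intact.

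For Part II, the core is the algebraic identity that the rescaled, shifted modes obey the Virasoro relations with central charge $Nc$. Set $\hat{L}_n := \tfrac{1}{N}r_c(L_{Nn}) + \tfrac{N^2-1}{24N}c\,\delta_{n,0}$. Because the added constant is central, $[\hat{L}_m,\hat{L}_n] = \tfrac{1}{N^2}[r_c(L_{Nm}),r_c(L_{Nn})]$, and applying \eqref{eqn:VirCRs} together with $\delta_{Nm,-Nn}=\delta_{m,-n}$ yields
\[
[\hat{L}_m,\hat{L}_n] = \frac{m-n}{N}\,r_c(L_{N(m+n)}) + \delta_{m,-n}\,\frac{(Nm)^3-Nm}{12N^2}\,c.
\]
For $m+n\neq 0$ the linear term is precisely $(m-n)\hat{L}_{m+n}$ and there is no central contribution on either side. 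The genuine verification is the case $m+n=0$: there $(m-n)\hat{L}_{m+n}=2m\hat{L}_0$ contributes the operator piece $\tfrac{2m}{N}r_c(L_0)$ (which matches) plus a constant $2m\cdot\tfrac{N^2-1}{24N}c$, and the charge-$Nc$ central term adds $\tfrac{m^3-m}{12}Nc$; one must check that $\tfrac{(Nm)^3-Nm}{12N^2}$ equals the sum of these two constants. Clearing the $12$, both sides collapse to $Nm^3-\tfrac{m}{N}$. This cubic identity, and the recognition that the shift $\tfrac{N^2-1}{24N}c$ is exactly the value making it hold, is the conceptual heart (the ``double the charge'' mechanism) and the step I expect to be the main obstacle for keeping all signs and coefficients correct.

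It remains to match the field formula to this representation and to read off the OPE. Expanding $L^{c,\text{shifted}}(\epsilon^i z) = \sum_n r_c(L_n)\,\epsilon^{-in} z^{-n}$ and summing over $i$, the orthogonality relation $\sum_{i=0}^{N-1}\epsilon^{-in}=N$ if $N\mid n$ and $0$ otherwise projects onto the modes $L_{Nk}$; combined with the prefactor $\tfrac{1}{N^2 z^{2N}}$ and the constant term, this gives $L^{Nc}(z^N)=\sum_k \hat{L}_k\,(z^N)^{-k-2}$, identifying the field modes with $r_{Nc}(L_k)$. Finally, viewing $\zeta=z^N$ as a single variable, $L^{Nc}$ is an ordinary $1$-point local field in $\zeta$ whose modes satisfy the Virasoro relations with charge $Nc$; the standard OPE--commutator correspondence (as in \eqref{eqn:VirOPEs}) then produces its OPE in $(\zeta,\omega)$, and substituting back $\zeta=z^N$, $\omega=w^N$, so that $\partial_\omega=\partial_{w^N}$ and the single pole $\zeta=\omega$ becomes the $N$-point locus $z=\epsilon^i w$, yields exactly \eqref{eqn:VirOPEsNc}.
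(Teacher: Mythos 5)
Your proposal is correct and follows essentially the same route as the paper: Part I by the same sign cancellation in the OPE under $z\mapsto -z$, $w\mapsto -w$, and Part II by the same direct mode computation in which the shift $\frac{N^2-1}{24N}c$ exactly absorbs the mismatch between $\frac{(Nm)^3-Nm}{12N^2}\,c$ and $\frac{m^3-m}{12}\,Nc$. Your extra step matching the field formula to the mode formula via roots-of-unity orthogonality is a detail the paper states only as an equivalence, but it does not change the argument.
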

{\it Proof:}
For part \textbf{I.}, a direct calculation with the OPEs is trivial:
\begin{equation*}
\tilde{L}^c(z)\tilde{L}^c(w)=L^c(-z)L^c(-w)\sim \frac{c/2}{(z-w)^4} + \frac{2L^c(-w)}{(z-w)^2}+ \frac{-\partial_{-w}L^c(-w)}{(z-w)}\sim \frac{c/2}{(z-w)^4} + \frac{2\tilde{L}^c(w)}{(z-w)^2}+ \frac{\partial_{w}\tilde{L}^c(w)}{(z-w)}.
\end{equation*}
 For part \textbf{II.}, we couldn't find a reference to this rather obvious (in hindsight) fact in the standard books and papers on  representation of the Virasoro algebra, so we prove it here.
It is easier to prove by using modes directly. We need to calculate the commutator $[L^{Nc}_{m}, L^{Nc}_{n}]$. We consider the case $m+n\neq 0$ first. Then we have
\[
[L^{Nc}_{m}, L^{Nc}_{n}]=  \frac{1}{N^2}[L^{c}_{m}, L^{c}_{n}]=\frac{1}{N^2}(Nm-Nn)L^{c}_{Nm+Nn}=(m-n)\frac{L^{c}_{Nm+Nn}}{N}=L^{Nc}_{m+n}
\]
For the case $m+n= 0$, we can assume $m\neq 0$, hence
\begin{align*}
[L^{Nc}_{m}, L^{Nc}_{n}]&=  \frac{1}{N^2}[L^{c}_{m}, L^{c}_{n}]=\frac{1}{N^2}(Nm-Nn)L^{c}_{0}
+\frac{1}{N^2}\frac{N^3m^3-Nm}{12}c\\
&=(m-n)\frac{L^{c}_{0}}{N} +\frac{1}{12}\left(m^3-\frac{m}{N^2}\right)Nc=2mL^{Nc}_{0}-2m\frac{N^2-1}{24N}c +\frac{1}{12}\left(m^3-\frac{m}{N^2}\right)Nc\\
&=2mL^{Nc}_{0} +\frac{m^3-m}{12}Nc. \hspace{8.5cm} \ \square
\end{align*}
In the case of $\mathit{F^{\ten \frac{1}{2}}}$ the well-known Virasoro field of central charge  $c=\frac{1}{2}$ allows us to construct immediately another Virasoro field with central charge  $c=\frac{1}{2}$:
\[
\tilde{L}^{1/2}(z)=\frac{1}{2}:\partial_{-z} \phi ^D (-z) \phi ^D(-z):\in \mathbf{\mathfrak{FD}}\{ \phi ^D(z); 2 \}.
\]
$\tilde{L}^{1/2}(z)$ is a 1-point local Virasoro field with central charge $c=\frac{1}{2}$. We want to note that even though the field $\tilde{L}^{1/2}(z)$ is 1-point self local, it is 2-point  mutually local with the other fields in $\mathbf{\mathfrak{FD}}\{ \phi ^D(z); 2 \}$, including with $L^{1/2}(z)$. This  is the reason it can't be considered as part of the super vertex algebra structure on  $\mathit{F^{\ten \frac{1}{2}}}$, but as part of the twisted vertex algebra structure.

Moreover, $L^{1/2}(z)$ and $\tilde{L}^{1/2}(z)$ are the only 1-point local Virasoro fields on $\mathit{F^{\ten \frac{1}{2}}}$.
To prove that $L^{1/2}(z)$ and $\tilde{L}^{1/2}(z)$ are the only 1-point local Virasoro fields on $\mathit{F^{\ten \frac{1}{2}}}$ one needs to consider all weight two multi-local fields on  $\mathit{F^{\ten \frac{1}{2}}}$. Besides $:h^D (z)h^D (z):$ and $\partial_zh^D (z)$ and their shifts that we already considered (they produced a family of two-point Virasoro  fields), there are 4 fields to consider:
\begin{equation}
:\partial_{z} \phi ^D (z) \phi ^D(z):,\  :\partial_{-z} \phi ^D (-z) \phi ^D(-z):, \ :\partial_{z} \phi ^D (z) \phi ^D(-z):,\  :\partial_{-z} \phi ^D (-z) \phi ^D(z):.
\end{equation}
These 4 fields are linearly independent weight two fields, but moreover they are the only fields that have at most 4th order poles in their OPEs. In order to prove that $L^{1/2}(z)$ and $\tilde{L}^{1/2}(z)$ are the only 1-point local Virasoro fields one considers arbitrary linear combinations of the fields above. We omit this rather long, but straightforward calculation,  especially since  linear combination of two of them appears in $:h^D(w)^2:$, as we saw in  \eqref{eqn:h-normorder},  and  also linear combination of another two of them gives $\partial_zh^D (z)$, \eqref{eqn:hderiv}.
Direct calculation shows  there is no family of 1-point Virasoro fields.

 The same two fields  $L^{1/2}(z)$ and $\tilde{L}^{1/2}(z)$, but in a linear combination will also give the 2-point Virasoro field as an application of the Part \textbf{II.} of the Proposition  \ref{prop:doublingcharge}:
The field
\begin{equation}
\label{eqn:normal-order-L-D--1}
\tilde{L}^{1}(z^2): =\frac{1}{4z^2}\left(L^{1/2}(z) +\tilde{L}^{1/2}(z)\right)+\frac{1}{32z^4}=\frac{1}{8z^2}\left(:\partial_{z} \phi ^D (z) \phi ^D(z): +:\partial_{-z} \phi ^D (-z) \phi ^D(-z):\right)+\frac{1}{32z^4}
\end{equation}
has only  even-indexed modes,    $\tilde{L}^{1}(z^2):=\sum _{n\in \mathbb{Z}} \tilde{L}^{1}_{n} (z^2)^{-n-2}$ and moreover its modes $\tilde{L}^{1}_n$ satisfy the Virasoro algebra commutation relations with central charge $c=1$.
 Since $\tilde{L}^{1/2}(z)=L^{1/2}(-z)$, the formula above is   direct application of Part \textbf{II.} of Proposition \ref{prop:doublingcharge} for $N=2$. It also explains the constant $\frac{1}{32}$, as  $c=\frac{1}{2}$, and thus $\frac{N^2-1}{24N}c=\frac{1}{32}$. The constant $\frac{1}{32}$ seemed rather unusual in terms of Virasoro representations--one can only think of it as splitting the constant $\frac{1}{16}$ that appears in the twisted oscillator representation (although the Virasoro representation  given by $\tilde{L}^{1}(z^2)$ is not an oscillator representation). Nevertheless this same constant will appear again in the new  Virasoro representation for the correspondence of type C (\cite{AngVirC}). Here we have for the modes of the  field $\tilde{L}^{1}(z^2)$
\begin{equation}
\tilde{L}^{1}_n =\frac{1}{2}L^{1/2}_{2n} +\frac{1}{32}\delta_{n, 0}.
\end{equation}
Such a relation doesn't hold in the case of type C (\cite{AngVirC}), if for no other reason that there is no "half-charge" field there, nevertheless similar "double the charge, half the modes" relation holds in general, as we proved in  part \textbf{II.} of Proposition \ref{prop:doublingcharge} for $N=2$.

Note that  in this case  the field  $\tilde{L}^{1}(z^2)$, although a Virasoro field with central charge 1, and linearly independent from $L^{1}(z^2)$, doesn't contribute much new information, since
\[
\tilde{L}^{1}(z^2)=L^{\frac{1}{2}, -\frac{1}{4}}(z^2)=L^{1}(z^2) +\frac{1}{4z^3}h^D(z) +\frac{1}{32z^4},
\]
i.e., $\tilde{L}^{1}(z^2)$ is a an element of the two parameter family $L^{\lambda, b}(z^2)$.
\begin{remark}
The boson-fermion correspondence of type D-A  was discussed  during the work on both \cite{AngTVA}  and on  \cite{Rehren}, where it was considered  from the point of view of   "multilocal fermionization". The author thanks K. Rehren for the helpful discussions and a physics point of view on the subject.  Some of the Virasoro fields we derive also appear independently in \cite{Rehren},  as stress-energy tensors.
\end{remark}

      To summarize the various Virasoro fields on $\mathit{F^{\ten \frac{1}{2}}}$ we obtained. The two   1-point-local Virasoro fields, $L^{1/2}(z)$ and $\tilde{L}^{1/2}(z)$ are  due to the  super vertex algebra structure on each of the two "sheets" (each isomorphic to $\mathit{F^{\ten \frac{1}{2}}}$) that the  twisted vertex algebra structure  "glues" together. The  2-point local Virasoro fields  $L^{1}(z^2)$, $\tilde{L}^{1}(z^2)$ and $L^{\lambda}(z^2)$ are due to the overall twisted vertex algebra structure  with space of fields $ \mathbf{\mathfrak{FD}}\{ \phi ^D(z); 2 \}$ that is  responsible for the  bosonization of type D-A. We expect that there is a genuine (non-splitting) representation of a version of a two-point Virasoro algebra (see e.g. \cite{Krich-Nov1}, \cite{Krich-Nov2}, \cite{Schl}, \cite{Cox-Vir}) arising from a linear combination of the 4  weight-two fields \begin{equation}
:\partial_{z} \phi ^D (z) \phi ^D(z):,\  :\partial_{-z} \phi ^D (-z) \phi ^D(-z):, \ :\partial_{z} \phi ^D (z) \phi ^D(-z):,\  :\partial_{-z} \phi ^D (-z) \phi ^D(z):.
\end{equation}
But although the fields $L^{1}(z^2)$, $\tilde{L}^{1}(z^2)$ and $L^{\lambda}(z^2)$ appear to be 2-point-local, they can be reduced to local fields after change of variables, i.e., these Virasoro fields do produce  genuine representations of the Virasoro algebra on $\mathit{F^{\ten \frac{1}{2}}}$ (as opposed to representations of a 2-point Virasoro algebra). We summarize these results in the following:
\begin{thm}\label{thm:summaryVir}
\textbf{I.} Let $L^{1/2}(z)=\frac{1}{2}:\partial_{z} \phi ^D (z) \phi ^D(z):$, $\tilde{L}^{1/2}(z)=\frac{1}{2}:\partial_{-z} \phi ^D (-z) \phi ^D(-z):$. The 1-point-local fields $L^{1/2}(z)$ and $\tilde{L}^{1/2}(z)$ are  Virasoro fields with central charge $\frac{1}{2}$, i.e., the modes of each of these two fields generate a representation of the Virasoro algebra with central charge $\frac{1}{2}$ on $\mathit{F^{\ten \frac{1}{2}}}$.\\
\textbf{II.} Let $L^{1}(z^2): =\frac{1}{2z^2}:h^D (z)h^D (z):$,
$\tilde{L}^{1}(z^2):=\frac{1}{8z^2}\left(:\partial_{z} \phi ^D (z) \phi ^D(z): +:\partial_{-z} \phi ^D (-z) \phi ^D(-z):\right)+\frac{1}{32z^4}$.
 The 2-point-local fields $L^{1}(z^2)$ and $\tilde{L}^{1}(z^2)$ are  Virasoro fields  with central charge $1$, and their modes generate respectively representations of the Virasoro algebra with central charge $1$ on $\mathit{F^{\ten \frac{1}{2}}}$.\\
\textbf{III.} Let $L^{\lambda, b}(z^2): =L^{1}(z^2) + (\frac{1}{2} -\lambda )\frac{1}{2z^2}\partial_z h^D (z)-\frac{b}{z^3}h^D(z) +\frac{16b^2+2b(1-2\lambda) -3(1-2\lambda)^2}{32z^4}$, $\lambda, b \in \mathbb{C}$. For each $\lambda, b \in \mathbb{C}$,\  $L^{\lambda, b}(z^2)$ is a 2-point-local Virasoro field acting on $\mathit{F^{\ten \frac{1}{2}}}$, whose modes generate a representation of the Virasoro algebra with central charge $-2+12\lambda -12\lambda^2$. If $\lambda =\frac{1}{2}, \ b=0$, $L^{\frac{1}{2}, 0}(z^2)=L^{1}(z^2)$; If $\lambda =\frac{1}{2}, \ b=\frac{-1}{4}$, $L^{\frac{1}{2}, -\frac{1}{4}}(z^2)=\tilde{L}^{1}(z^2)$.
\end{thm}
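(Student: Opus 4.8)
The plan is to assemble the theorem directly from the results already in hand, since each field appearing in the statement has either been treated in an earlier proposition or is a standard feature of the super vertex algebra structure on $\mathit{F^{\ten \frac{1}{2}}}$. I would organize the argument in three parts matching the statement, so that the only genuine computation is the identification of one special member of the family.

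For Part \textbf{I}, the field $L^{1/2}(z)=\frac{1}{2}:\partial_z\phi^D(z)\phi^D(z):$ is the standard conformal vector of the neutral-fermion super vertex algebra, so its modes generate $Vir$ with central charge $\frac{1}{2}$ by the references cited at \eqref{eqn:Vir1/2}. For $\tilde{L}^{1/2}(z)$ I would observe that $\tilde{L}^{1/2}(z)=L^{1/2}(-z)$ and then invoke Part \textbf{I} of Proposition \ref{prop:doublingcharge}, which shows that substituting $z\mapsto -z$ in any $1$-point-local Virasoro field again yields a $1$-point-local Virasoro field of the same central charge; this gives central charge $\frac{1}{2}$ for $\tilde{L}^{1/2}(z)$ at once.

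For Part \textbf{II}, the claim for $L^1(z^2)$ is exactly the first half of Proposition \ref{prop:VirasoroNew}, so nothing new is needed. For $\tilde{L}^1(z^2)$ I would again use $\tilde{L}^{1/2}(z)=L^{1/2}(-z)$, so that $\tilde{L}^1(z^2)=\frac{1}{4z^2}\big(L^{1/2}(z)+\tilde{L}^{1/2}(z)\big)+\frac{1}{32z^4}$ is precisely the $N=2$ instance of the ``half the modes, double the charge'' field constructed in Part \textbf{II} of Proposition \ref{prop:doublingcharge} with $c=\frac{1}{2}$; in particular the additive constant matches $\frac{N^2-1}{24N}c=\frac{1}{32}$ at $N=2$. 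Hence $\tilde{L}^1(z^2)$ generates $Vir$ with central charge $Nc=1$.

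Part \textbf{III} is the content of the second half of Proposition \ref{prop:VirasoroNew}, which already proves that $L^{\lambda,b}(z^2)$ is a Virasoro field with central charge $-12\lambda^2+12\lambda-2$ for all $\lambda,b\in\mathbb{C}$. What remains is to verify the two special identifications. The case $\lambda=\frac{1}{2}$, $b=0$ is immediate by substitution, since the derivative term, the $\frac{b}{z^3}$ term, and the $\frac{1}{z^4}$ correction all vanish, leaving $L^{\frac{1}{2},0}(z^2)=L^1(z^2)$. The case $\lambda=\frac{1}{2}$, $b=-\frac{1}{4}$ is the one that genuinely requires computation and is the main (though still routine) obstacle: substituting into \eqref{eqn:normal-order-L-D-lambda} gives $L^{\frac{1}{2},-\frac{1}{4}}(z^2)=L^1(z^2)+\frac{1}{4z^3}h^D(z)+\frac{1}{32z^4}$, and I would then rewrite $L^1(z^2)=\frac{1}{2z^2}:h^D(z)h^D(z):$ using the normal-ordering identity \eqref{eqn:h-normorder}. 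The shift term $-\frac{1}{2w}h^D(w)$ there contributes exactly $-\frac{1}{4z^3}h^D(z)$ to $L^1(z^2)$, which cancels the explicit $+\frac{1}{4z^3}h^D(z)$ and leaves $\frac{1}{8z^2}\big(:\partial_z\phi^D(z)\phi^D(z):+:\partial_{-z}\phi^D(-z)\phi^D(-z):\big)+\frac{1}{32z^4}=\tilde{L}^1(z^2)$. The only thing to watch, as the paper repeatedly stresses, is precisely that power-of-$z$ shift in \eqref{eqn:h-normorder}: it is what makes the $1/z^3$ and $1/z^4$ corrections appear and is easy to miss if one treats the normal-ordered square of $h^D$ naively.
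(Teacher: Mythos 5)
Your proposal is correct and follows essentially the same route as the paper: the theorem is assembled as a summary of Proposition \ref{prop:VirasoroNew}, Proposition \ref{prop:doublingcharge} (Part \textbf{I} for $\tilde{L}^{1/2}(z)=L^{1/2}(-z)$, Part \textbf{II} with $N=2$, $c=\frac{1}{2}$ for $\tilde{L}^{1}(z^2)$), and the classical central charge $\frac{1}{2}$ result for $L^{1/2}(z)$. Your explicit verification that $L^{\frac{1}{2},-\frac{1}{4}}(z^2)=\tilde{L}^{1}(z^2)$ via the shift term in \eqref{eqn:h-normorder} is exactly the cancellation underlying the identity the paper states without detailed computation, so nothing is missing.
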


Starting with Igor Frenkel's work in \cite{Frenkel-BF}, and later, the boson-fermion correspondence of type A   was related to various kinds of Howe-type dualities (see e.g. \cite{WangKac}, \cite{WangDuality}, \cite{WangDual}), and again there was an attempt to understand a boson-fermion correspondence as being identified by, or equivalent to,  a certain duality or a basic  modules decomposition (see e.g. \cite{WangDuality}, \cite{WangDual}).
 For instance, in \cite{WangDual}  it was mentioned that   "the $(GL_1, \widehat{\mathcal{D}})$ -duality in Theorem 5.3 is essentially the celebrated boson-fermion correspondence".  The  $(GL_1,  \widehat{\mathcal{D}})$ -duality of Theorem 5.3  of \cite{WangDual} in the case of $\mathit{F^{\ten 1}}$, as Wang himself mentions, is equivalent to \eqref{Aequiv}.
 We want to show that $\mathit{F^{\ten 1}}\cong\mathit{F^{\ten \frac{1}{2}}}$ also as $\widehat{\mathcal{D}}$ modules. $\widehat{\mathcal{D}}$ denotes the universal central extension of the Lie algebra of differential operators on the circle, often denoted also by $W_{1+\infty}$, see e.g.,  \cite{Kac}, \cite{WangDual}, which we briefly recall here.  Denote by $\mathcal{D}=D\!i\!f\!f  \mathbb{C}^*$ the Lie algebra of differential operators
\[
\sum_{j=0}^M a_j(t)\partial_t^j, \quad a_j(t)\in \mathbb{C}[t, t^{-1}]
\]
Denote also
\[
J^k_n=t^{n+k} (-\partial_t)^k
\]
The Lie algebra $\mathcal{D}$ has a basis consisting of the operators $J_n^k$.  Moreover it can be realized as a subalgebra of the $\tilde{gl}_\infty$ via
\[
J^k_n\mapsto (-1)^k{{-j}\choose {k}}E_{j-n, j}
\]
Let $\widehat{\mathcal{D}}$ denotes the universal central extension of the Lie algebra $\mathcal{D}$, $\widehat{\mathcal{D}}=\mathcal{D}+\mathbb{C}C$, and by abuse of notation we use the same notation $J^k_n$ for the corresponding elements in $\widehat{\mathcal{D}}$.  Consequently, $\widehat{\mathcal{D}}$ can be realized as a subalgebra of $a_\infty$. (Thus we already have a representation of $\widehat{\mathcal{D}}$ on $\mathit{F^{\ten 1}}\cong\mathit{F^{\ten \frac{1}{2}}}$, as a consequence of \cite{ACJ2}). Nevertheless,  introduce
the fields
\begin{equation}
J^{k}(z^2)=\sum_{n\in \mathbb{Z}} J^k_n z^{-2k-2n-2}
\end{equation}
\begin{prop}
The assignment
\[
J^k(z^2)\mapsto \frac{(-1)^k}{4z}:\left(\phi ^D(z)-\phi ^D(-z)\right)\partial_{z^2}^k \left(\phi ^D(z)+\phi ^D(-z)\right):
\]
gives an action of the Lie algebra  $\widehat{\mathcal{D}}$ on $\mathit{F^{\ten \frac{1}{2}}}$. Furthermore, each component $\mathit{F_{(m)}^{\ten \frac{1}{2}}}\in \mathit{F^{\ten \frac{1}{2}}}$ is irreducible under the action of $\widehat{\mathcal{D}}$, and  as $\widehat{\mathcal{D}}$ modules
\[
\mathit{F^{\ten 1}}\cong\mathit{F^{\ten \frac{1}{2}}}=\oplus_{m\in \mathbb{Z}}\mathit{F_{(m)}^{\ten \frac{1}{2}}}
\]
\end{prop}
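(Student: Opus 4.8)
The plan is to \emph{fold} the two-point-local neutral-fermion description of $\mathit{F^{\ten \frac{1}{2}}}$ into the one-point-local charged-fermion description of $\mathit{F^{\ten 1}}$ via the substitution $z\mapsto z^2$, and then to import the known fermionic realization of $\widehat{\mathcal{D}}$ on $\mathit{F^{\ten 1}}$ together with the $(GL_1,\widehat{\mathcal{D}})$-duality. First I would introduce the reorganized fields
\[
\hat\psi^{+}(z^2):=\tfrac12\bigl(\phi^D(z)+\phi^D(-z)\bigr),\qquad
\hat\psi^{-}(z^2):=\tfrac{1}{2z}\bigl(\phi^D(z)-\phi^D(-z)\bigr),
\]
and check, by reindexing the modes of $\phi^D(z)$ according to the parity of $n-\tfrac12$, that these are genuine fields in the variable $u=z^2$, namely $\hat\psi^{-}(u)=\sum_{l}\phi^D_{2l+1/2}\,u^{-l-1}$ and $\hat\psi^{+}(u)=\sum_{l}\phi^D_{2l+3/2}\,u^{-l-1}$. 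Using only the neutral-fermion OPE \eqref{equation:OPE-D} together with $\phi^D(z)\phi^D(-w)\sim\frac{1}{z+w}$, a short computation yields the charged free fermion OPEs
\[
\hat\psi^{+}(z^2)\hat\psi^{-}(w^2)\sim\frac{1}{z^2-w^2}\sim\hat\psi^{-}(z^2)\hat\psi^{+}(w^2),
\qquad
\hat\psi^{\pm}(z^2)\hat\psi^{\pm}(w^2)\sim 0 ;
\]
equivalently the modes satisfy the relations of $\mathit{Cl_A}$. Since $\phi^D_n|0\rangle=0$ for $n>0$ translates into $\hat\psi^{\pm}_l|0\rangle=0$ for $l\ge 0$, the vacuum condition matches that of $\mathit{F^{\ten 1}}$, so $\mathit{F^{\ten \frac{1}{2}}}$ equipped with the $\hat\psi^{\pm}$-action is precisely the charged-fermion Fock module $\mathit{F^{\ten 1}}$.

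Second I would rewrite the proposed field in these variables. Since $\phi^D(z)-\phi^D(-z)=2z\,\hat\psi^{-}(z^2)$ and $\phi^D(z)+\phi^D(-z)=2\,\hat\psi^{+}(z^2)$, the prefactor $\frac{(-1)^k}{4z}$ absorbs the accompanying factors $2z$ and $2$, and one obtains
\[
J^{k}(z^2)=(-1)^k:\hat\psi^{-}(z^2)\,\partial_{z^2}^{k}\hat\psi^{+}(z^2):,
\]
which is exactly the standard fermionic realization of the $\widehat{\mathcal{D}}$ generating fields on $\mathit{F^{\ten 1}}$ (see \cite{Kac}, \cite{WangDual}), now read in the variable $u=z^2$. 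The mode indexing $J^k(z^2)=\sum_n J^k_n z^{-2k-2n-2}$ is chosen precisely so that $J^k_n$ corresponds to $t^{n+k}(-\partial_t)^k$; the case $k=0$ recovers $J^0_n=h_n$, the Heisenberg modes of Proposition \ref{prop:Heis}. The representation property, i.e.\ that the $J^k_n$ satisfy the defining relations of $\widehat{\mathcal{D}}$ with the expected central term, is then inherited verbatim from the type A computation.

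Third, for irreducibility I would invoke the $(GL_1,\widehat{\mathcal{D}})$-duality on $\mathit{F^{\ten 1}}$ (Theorem 5.3 of \cite{WangDual}, see also \cite{Kac}), which asserts that each charge subspace of $\mathit{F^{\ten 1}}$ is an irreducible $\widehat{\mathcal{D}}$-module. Under the identification above the charge is measured by $J^0_0=h_0$, whose eigenspaces are, by Proposition \ref{prop:heisdecomp}, exactly the components $\mathit{F_{(m)}^{\ten \frac{1}{2}}}$ of the $dg$-grading. Hence each $\mathit{F_{(m)}^{\ten \frac{1}{2}}}$ is irreducible under $\widehat{\mathcal{D}}$, and the Clifford-module isomorphism of the first step intertwines the two $\widehat{\mathcal{D}}$-actions, yielding $\mathit{F^{\ten 1}}\cong\mathit{F^{\ten \frac{1}{2}}}=\oplus_{m}\mathit{F_{(m)}^{\ten \frac{1}{2}}}$ as $\widehat{\mathcal{D}}$-modules.

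The delicate point — the step that genuinely uses the twisted vertex algebra setting rather than being a formal manipulation — is the compatibility of normal-ordered products under the fold $z\mapsto z^2$. One must verify that the creation/annihilation ($\pm$) splitting of $\phi^D(z)$ restricts cleanly along the two parity classes of modes, so that the two-point-local normal-ordered product $:(\phi^D(z)-\phi^D(-z))\partial_{z^2}^k(\phi^D(z)+\phi^D(-z)):$ in the variable $z$ coincides with the one-point-local product $:\hat\psi^{-}(u)\partial_u^{k}\hat\psi^{+}(u):$ in the variable $u=z^2$. This coincidence is what makes the explicit field $J^k$ agree with the standard realization and what permits the central term and the irreducibility to be transported from $\mathit{F^{\ten 1}}$; everything else is bookkeeping: the mode reindexing, the elementary OPE computation, and citing the known type A facts.
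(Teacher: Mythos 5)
Your proposal is correct, and its core is the same as the paper's: the paper likewise folds the neutral fermion into a charged pair, via $\frac{\phi^D(z)-\phi^D(-z)}{2z}\mapsto \psi^+(z^2)$ and $\frac{\phi^D(z)+\phi^D(-z)}{2}\mapsto \psi^-(z^2)$ (these are your $\hat\psi^-$, $\hat\psi^+$ up to a harmless swap of the $\pm$ labels, which only flips the sign of the charge grading), and then cites the standard fermionic realization of $\widehat{\mathcal{D}}$ on $\mathit{F^{\ten 1}}$ (\cite{Kac}, Theorem 5.3) so that the commutation relations are inherited verbatim. Where you genuinely diverge is the irreducibility step: you import the $(GL_1,\widehat{\mathcal{D}})$-duality of Theorem 5.3 of \cite{WangDual}, whereas the paper stays internal --- it notes that $J^0_n\mapsto h_n$ realizes the Heisenberg algebra $\mathcal{H}_{\mathbb{Z}}$ as a subalgebra of $\widehat{\mathcal{D}}$, so each component $\mathit{F_{(m)}^{\ten \frac{1}{2}}}$, being irreducible under $\mathcal{H}_{\mathbb{Z}}$ by Proposition \ref{prop:heisdecomp}, is \emph{a fortiori} irreducible under the larger algebra $\widehat{\mathcal{D}}$. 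The paper's route is more elementary and self-contained (it uses only its own Heisenberg decomposition plus the trivial observation that irreducibility under a subalgebra implies irreducibility under the full algebra), while yours leans on an external duality theorem of depth comparable to the statement being proved; both are valid, and both hinge on identifying the $dg$-components with the eigenspaces of the charge operator $h_0=J^0_0$, which you do. One merit of your write-up is that you flag and verify the compatibility of the two normal orderings under the mode-parity splitting (the annihilation modes of $\phi^D$ are carried exactly onto the annihilation modes of $\hat\psi^{\pm}$), a point the paper's proof passes over in silence.
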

\begin{proof}
One can check that the commutation relations are satisfied directly-- it easiest to use the fact that
\[
\frac{\phi ^D(z)-\phi ^D(-z)}{2z}\mapsto \psi^+(z^2), \quad \frac{\phi ^D(z)+\phi ^D(-z)}{2}\mapsto \psi^-(z^2)
\]
where the $\psi^+(z)$ and  $\psi^-(z)$ are the generating fields of the charged free fermions Fock space $\mathit{F^{\ten 1}}$. The fact that the fields $(-1)^k:\psi^+(z)\partial_z^k\psi^+(z):$ have OPEs satisfying the commutation relations for  $\widehat{\mathcal{D}}$ is well know, see e.g. \cite{Kac} Theorem 5.3. Further, we have
\begin{equation}
[J^0_m, J^0_n]=m\delta _{m+n,0}C
\end{equation}
Thus the Heisenberg algebra  $\mathcal{H}_{\mathbb{Z}}$ is a subalgebra of $\widehat{\mathcal{D}}$.
Note that we have
\begin{equation}
J^{0}(z^2)=\sum_{n\in \mathbb{Z}} J^k_n z^{-2n-2} \mapsto \ \frac{1}{4z}:\phi ^D(z)\phi ^D(-z):=\frac{1}{z}h^D(z), \quad
J^0_m \mapsto h^D_m
\end{equation}
Since $\mathit{F_{(m)}^{\ten \frac{1}{2}}}\in \mathit{F^{\ten \frac{1}{2}}}$ is irreducible under the action of the Heisenberg algebra  $\mathcal{H}_{\mathbb{Z}}$, that completes the proof.
\end{proof}
To summarize,  we proved that $\mathit{F^{\ten 1}}\cong\mathit{F^{\ten \frac{1}{2}}}$   as graded vector spaces,  as  $\mathcal{H}_{\mathbb{Z}}$  modules, and as $W_{1+\infty}$ modules, although obviously there should be some difference in terms of the physics structures on these two  spaces. As super vertex algebras, i.e., twisted vertex algebras of order $N=1$,  $\mathit{F^{\ten 1}}$ and  $\mathit{F^{\ten \frac{1}{2}}}$ are not equivalent. But if we introduce an $N=2$ twisted vertex algebra structures both on $\mathit{F^{\ten \frac{1}{2}}}$ and on $\mathit{F^{\ten 1}}$, there is an isomorphism of  twisted vertex algebra structures which constitutes the boson-fermion correspondence of type D-A. If we consider a twisted vertex algebra structure with $N>2$, then once more the twisted vertex algebra structures on  $\mathit{F^{\ten 1}}$ and  $\mathit{F^{\ten \frac{1}{2}}}$ will become inequivalent, as we will show in a follow-up paper. For $N>2$, we can indeed bosonize the Fock space $\mathit{F^{\ten \frac{1}{2}}}$, but the resulting bosonic twisted vertex algebra will be a rank one non-integer lattice chiral algebra, and thus not isomorphic to the rank one odd lattice vertex algebra  $\mathit{B_A}$ ($\mathit{B_A}$ is the super vertex algebra isomorphic to  $\mathit{F^{\ten 1}}$).
 This shows that the type of chiral algebra structure on $\mathit{F^{\ten 1}}$ vs $\mathit{F^{\ten \frac{1}{2}}}$ is of great importance, in particular the set of points of locality is a necessary part of the data describing any boson-fermion correspondence. Thus the chiral algebra  structure on $\mathit{F^{\ten \frac{1}{2}}}$, vs the chiral structure on $\mathit{F^{\ten 1}}$ is really what distinguishes these spaces and their physical structure and what defines a boson-fermion correspondence.

\def\cprime{$'$}

\end{document}